\documentclass[11pt]{article}
\pdfoutput=1
\usepackage[margin=1in]{geometry}
\bibliographystyle{plain}

\usepackage{amsmath,amssymb}
\usepackage{comment}
\usepackage{tikz}
  \usetikzlibrary{automata,positioning,matrix,calc,petri,backgrounds}

\usepackage{caption}
\usepackage{subfigure}
         \usepackage{floatrow}
         
     \usepackage{hyperref}

\newcommand{\BSWAPInf}{{\mathcal A}_\psyst^\infty}


\newtheorem{theorem}{Theorem}
\newtheorem{lemma}{Lemma}

\newtheorem{proof}{Proof}
\newcommand*{\qed}{\hfill\ensuremath{\square}}%

\newcommand\tpl[1]{\langle{#1}\rangle}

\newcommand{\tup}[1]{\langle #1 \rangle}

\newcommand{\src}{\textsf{src}}
\newcommand{\dst}{\textsf{dst}}
\newcommand{\actn}{\textsf{actn}}

\newcommand{\brd}{\mathfrak{b}}
\def\Actions{\Sigma_{\textsf{actn}}}
\def\Actionprts{\Sigma_{\textsf{rdz}}}
\def\Comm{\Sigma_{\textsf{com}}}

\newcommand\restr[2]{{
  \left.\kern-\nulldelimiterspace 
  #1 
  \vphantom{\big|} 
  \right|_{#2} 
  }}

\newcommand{\proctemp}{P}
\newcommand{\uwd}{\multimap}
\newcommand{\rwd}{\circledcirc}
\newcommand{\procuwd}{P^\multimap}
\newcommand{\sysinst}[1]{{\mathcal{P}^{#1}}}

\newcommand{\psyst}{{\mathcal P}}
\newcommand{\psysttimed}{{\mathcal T}}

\newcommand{\runs}{\text{runs}}
\newcommand{\comp}{\text{comp}}
\newcommand{\edge}{\text{edge}}

\newcommand{\procs}{\text{prcs}}

\newcommand{\proj}[1]{\text{proj}_{#1}}
\newcommand{\exec}{\textsc{exec}_\psyst}
\newcommand{\execLimited}{\textsc{exec}_{\psysttimed}}

\newcommand{\execfin}[1][\psyst]{\textsc{exec}_{#1}^{\text fin}}
\newcommand{\execinf}[1][\psyst]{\textsc{exec}_{#1}^\infty}

\newcommand{\execNFW}{{\mathcal A}_\psyst^{fin}}
\newcommand{\execBSW}{{\mathcal A}_\psyst^{\infty}}

\newcommand{\A}{{\mathcal A}}

\newcommand{\bgood}{\textsf{blue}}
\newcommand{\good}{\textsf{green}}
\newcommand{\sgood}{\textsf{orange}}
\newcommand{\bad}{\textsf{red}}

\newcommand{\Pspec}{{\mathcal F}}

\newcommand{\PSPACE}{\textsc{pspace}}
\newcommand{\EXPSPACE}{\textsc{expspace}}

\newcommand\msg[1]{\ensuremath{\mathsf{#1}}}




%
%







\def \st {\, \vert\, }



\newcommand\head[1]{\smallskip\noindent\textbf{#1.}}










\newcommand\LTL{{\textsf{LTL}}}








\newcommand\Nat{\mathbb{N}}

\newcommand\Rat{\mathbb{Q}}

\newcommand\RatGEZ{\mathbb{Q}_{\geq 0}}

\def\NatZero{\mathbb{N}_0}





\newcommand{\timedStates}{\ensuremath{Q}}
\newcommand{\clocks}{\ensuremath{C}}

\date{}

\begin{document}

\title{Liveness of Parameterized Timed Networks
 \thanks{Benjamin Aminof and Florian Zuleger were supported by the Austrian National Research Network S11403-N23 (RiSE) of the Austrian Science Fund (FWF) and by the Vienna Science and Technology Fund (WWTF) through grant ICT12-059.
 Sasha Rubin is a Marie Curie fellow of the Istituto Nazionale di Alta
 Matematica. The final publication is available at Springer via \url{http://dx.doi.org/10.1007/978-3-662-47666-6_30}
 }
}
\author{
Benjamin Aminof\\
Technische Universit\"at Wien, Austria
\and 
Sasha Rubin\\
Universit\`a degli Studi di Napoli ``Federico II", Italy
\and
Francesco Spegni\\
Universit\`a Politecnica delle Marche, Ancona, Italy
\and
Florian Zuleger\\
Technische Universit\"at Wien, Austria}

\maketitle

\begin{abstract}
We consider the model checking problem of infinite state systems given in the form of parameterized discrete timed networks with multiple clocks.
We show that this problem is decidable with respect to specifications given by B- or S-automata.
Such specifications are very expressive (they strictly subsume $\omega$-regular specifications), and easily express complex liveness and safety properties.
Our results are obtained by modeling the passage of time using symmetric broadcast, and by solving the model checking problem of parameterized systems of untimed processes communicating using $k$-wise rendezvous and symmetric broadcast.
Our decidability proof makes use of automata theory, rational linear programming, and geometric reasoning for solving certain reachability questions in vector addition systems; we believe these proof techniques will be useful in solving related problems.
\end{abstract}

\section{Introduction}
Timed automata --- finite state automata enriched by a finite number of dense- or discrete-valued clocks --- can be used to model more realistic circuits and protocols than untimed systems~\cite{alur99,chevallier09}.  A timed network consists of an arbitrary but fixed number of timed automata running in parallel~\cite{aj03,ADM04}. In each computation step, either some fixed number of automata synchronize by a rendezvous-transition or time advances. We consider the \emph{parameterized model-checking problem (PMCP)} for timed networks: Does a given specification (usually given by a suitable automaton) hold \emph{for every system size}? Apart from a single result which deals with much weaker synchronization than rendezvous~\cite{ss14}, no positive PMCP results for liveness specifications of timed automata are known. 

\paragraph{System model:}
In this paper we prove the decidability of the PMCP for discrete timed networks with no controller and liveness specifications. To do this, we reduce the PMCP of these timed networks to the PMCP of {\em RB-systems} --- systems of finite automata communicating via \emph{$k$-wise rendezvous} and \emph{symmetric broadcast}.
This broadcast action is symmetric in the sense that there is no designated sender.
In contrast, the standard broadcast action can distinguish between sender and receivers, and so the PMCP of liveness properties is undecidable even in the untimed setting~\cite{efm99}.

\paragraph{Our Techniques and Results:}
Classical automata (e.g., nondeterministic B\"uchi word automata (NBW)) are not able to capture the behaviors of RB-systems.  Thus, our decidability result uses nondeterministic BS-automata (and their fragments B- and S-automata) which strictly subsume NBW~\cite{b10}.

We show that the PMCP is decidable for controllerless discrete timed networks and (and systems communicating via $k$-wise rendezvous and symmetric broadcast) and specifications given by B-automata or S-automata (and in particular by NBW) or for negative specifications (i.e., the set of bad executions) given by BS-automata. We prove decidability by constructing a B-automaton that precisely characterizes the runs of a timed network from the point of view of a single process.
Along the way, we also obtain an {\sc ExpSpace} upper bound for the PMCP of safety properties of discrete timed networks.

In order to build the B-automaton, an intricate analysis of the interaction between the transitions caused by the passage of time  (modeled by broadcasts) which involve all processes, and those that are the result of rendezvousing processes, is needed.
It is this interaction that makes the problem complicated. Thus, for example, results concerning pairwise rendezvous without broadcast~\cite{gs92} do not extend to our case.  Our solution to this problem involves the introduction of the idea of a {\em rational relaxation} of a Vector Addition System, and geometric lemmas concerning paths in these relaxations. It is important to note that these vector addition systems can not capture the edges that correspond to the passage of time. However, they provide the much needed flexibility in capturing what happens in between time ticks {\em in the presence of} these ticks.

\paragraph{Related Work.}

Discrete timed networks with rendezvous and a controller were introduced in \cite{ADM04} where it was shown that safety is decidable using the technique of well-structured transition systems. Their result implies a non-elementary upper bound (which we improve to \textsc{ExpSpace}) for the complexity of the PMCP of safety properties of timed networks without a controller. PMCP of liveness properties for continuous-time networks with a controller process is undecidable~\cite{aj03}. However, their proof heavily relies on time being dense and on the availability of a distinguished controller process.
RB-systems with a controller were introduced in \cite{KouvarosL13} where it is proved that under an additional strong restriction on the environment and process templates (called a shared simulation), such systems admit cutoffs that allow one to model check epistemic-temporal logic of the parameterised systems. The main difference between our work and theirs is: we do not have a controller, we make no additional restrictions, and we can model check specifications given by B- or S-automata. The authors in \cite{ss14} proved that the PMCP is decidable for continuous timed networks synchronizing using conjunctive Boolean guards and MITL and TCTL specifications. Finally, there are many decidability and undecidability results in the untimed setting,  e.g.,\cite{Su88,efm99,DelzannoSZ10,AJKR14,concur14}.

\section{Definitions and Preliminaries}\label{sec: definitions}

\head{Labeled Transition Systems}
A \emph{(edge-)labeled transition system (LTS)} is a tuple
$\tup{S,I,R,\Sigma}$, where $S$ is the set of {\em states} (usually $S \subseteq \Nat$),
$I \subseteq S$ are the {\em initial states}, $R \subseteq S \times \Sigma \times S$ is the {\em edge relation},
and $\Sigma$ is the {\em edge-labels alphabet}. {\em Paths} are sequences of transitions, and runs are paths starting in initial states.

\head{Automata}
We use standard notation and results of automata, such as nondeterministic B\"uchi word automata (NBW) \cite{vardi96}. A {\em BS-word automaton (BSW)} (\cite{b10}) is a tuple $\tup{\Sigma, Q, Q_0, \Gamma, \delta, \Phi}$ where $\Sigma$ is a finite {\em input alphabet}, $Q$ is a set of {\em states}, $Q_0 \subseteq Q$ is a set of {\em initial states}, $\Gamma$ is a set of {\em counter (names)}, $\delta \subseteq Q \times \Sigma \times \mathcal{C}^* \times Q$ is the {\em transition relation} where $\mathcal{C}$ is the set of {\em counter operations}, i.e. $c := 0, c : = c + 1, c := d$ for $c,d \in \Gamma$, and $\Phi$ is the {\em acceptance condition} described below. A run $\rho$ is defined like for nondeterministic automata over infinite words by ignoring the $\mathcal{C}^*$ component. Denote by $c(\rho,i)$ the $i$th value assumed by counter $c \in \Gamma$ along $\rho$. The acceptance condition $\Phi$ is a positive Boolean combination of the following conditions ($q \in Q, c \in \Gamma$): (i) $q$ is visited infinitely often (B\"uchi-condition); (ii) $\limsup_i c(\rho,i) < \infty$ (B-condition); (iii)  $\liminf_i c(\rho,i) = \infty$ (S-condition).  An automaton that does not use B-conditions is called an {\em S-automaton (SW)}, and one that does not use S-conditions is called a {\em B-automaton (BW)}.

It is known that BSWs are relatively well behaved~\cite{b10}: their emptiness problem is decidable; they are closed under union and intersection, but not complement; and BW (resp. SW) can be complemented to SW (resp. BW).
Since BSWs are not closed under complement, we are forced, if we are to use the automata-theoretic approach for model checking (cf. \cite{vardi96}), to give the specification in terms of the undesired behaviours, or to consider specifications in terms of BWs or SWs (which both strictly extend $\omega$-regular languages).

\head{Rendezvous with Symmetric Broadcast (RB-System)}
Intuitively, RB-systems describe the parallel composition of $n \in \Nat$ copies of a process {\em template}.
An RB-system evolves nondeterministically: either a $k$-wise rendezvous action is taken, i.e., $k$ different processes instantaneously synchronize on a rendezvous action $\msg{a}$, or the symmetric broadcast action is taken, i.e., all processes must take an edge labeled by $\brd$. 
Systems without the broadcast action are called R-systems.

In the rest of the paper, fix $k$ (the number of processes participating in a rendezvous), a finite set $\Actions$ of {\em rendezvous actions}, the \emph{rendezvous alphabet} $\Actionprts = \cup_{\msg{a} \in \Actions} \{ \msg{a}_1, \dots, \msg{a}_k\}$, and the  \emph{communication alphabet} $\Comm$ which is the union $\{((i_1,\msg{a}_1),\ldots,(i_k,\msg{a}_k)) \mid \msg{a} \in \Actions, i_j \in \mathbb{N}, j \in [k] \}\cup \{\brd\}$.

A \emph{process template} (or {\em RB-template}) is a finite LTS $\proctemp = \tup{S,I,R,\Actionprts \cup \{\brd\}}$ such that for every state $s \in S$ there is a transition $(s,\brd,s') \in R$ for some $s' \in S$. We call edges labeled by $\brd$ {\em broadcast edges}, and the rest {\em rendezvous edges}. For ease of exposition, we assume (with one notable exception, namely $\procuwd$ defined in Section~\ref{sec: unwinding}) that for every $\varsigma \in \Actionprts$ there is at most one edge in $\proctemp$ labeled by $\varsigma$ and we denote it by $\edge(\varsigma)$.\footnote{This can always be assumed by increasing the size of the rendezvous alphabet.}
.
The {\em RB-system $\sysinst{n}$} is defined, given a template $\proctemp$ and $n \in \Nat$, is defined as the finite LTS $\tup{Q^n,Q^n_0,\Delta^n,\Comm}$\footnote{Even though $\Comm$ is infinite, $\Delta^n$ refers only to a finite subset of it.}
where:
\begin{enumerate}
\item $Q^n$ is the set of functions (called {\em configurations}) of the form $f:[n] \to S$.
    We call $f(i)$ the {\em state of process $i$} in $f$. Note that we sometimes find it convenient to consider a more flexible naming of processes in which we let $Q^n$ be the set of functions $f:X \to S$, where $X \subset \Nat$ is some set of size $n$.

\item The set of {\em initial configurations} $Q^n_0 = \{ f \in Q^n \mid f(i) \in I \text{ for all } i \in [n]  \}$ consists of all configurations which map all processes to initial states of $\proctemp$.

\item The set of {\em global transitions} $\Delta^n$ are tuples $(f,\sigma,g) \in Q^n \times \Comm \times Q^n$ where one of the following two conditions hold:
    \begin{itemize}
    \item $\sigma = \brd$, and for every $i \in [n]$ we have that $(f(i), \brd, g(i)) \in R$.
        This is called a {\em broadcast transition}.
		
    \item $\sigma = ((i_1, \msg{a}_1), \dots, (i_k, \msg{a}_k))$, where $\msg{a} \in \Actions$ is the {\em action} taken, and $\{i_1, \dots, i_k\} \subseteq [n]$ are $k$ different processes.
        In this case, for every $1 \leq j \leq k$ we have that $(f(i_j),\msg{a}_j,g(i_j)) \in R$; and $f(i) = g(i)$ for every $i \not \in \{i_1, \dots, i_k\}$.
        This is called a {\em rendezvous transition}, and the processes in the set $\procs(\sigma) := \{i_1, \dots, i_k\}$ are called the {\em rendezvousing processes}.
	\end{itemize}
\end{enumerate}

We denote the {\em action taken} on a global transition $t = (f,\sigma,g)$ by $\actn(t)$. Thus, $\actn(t) := \msg{a}$ if $\sigma = ((i_1, \msg{a}_1), \dots, (i_k, \msg{a}_k))$, and otherwise $\actn(t) := \brd$.

A process template $\proctemp$ induces the {\em infinite RB-system} $\psyst$, i.e., the LTS $\psyst = \tup{Q,Q_0,\Delta,\Comm}$ where $Q = \cup_{n \in \Nat} Q^n$, $Q_0 = \cup_{n \in \Nat} Q^n_0$, $\Delta = \cup_{n \in \Nat} \Delta^n$.

\head{Executions of an RB-System, and the Parameterized Model-Checking Problem}
Given a global transition $t = (f,\sigma,g)$, and a process $i$, we say that $i$ {\em moved} in $t$ iff: $\sigma = \brd$, or $i \in \procs(\sigma)$.
We write $edge_i(t)$ for the edge of $\proctemp$ taken by process $i$ in the transition $t$, and $\bot$ if $i$ did not move in $t$.
Thus, if $\sigma = \brd$ then $edge_i(t) := (f(i), \brd, g(i))$; and if $\sigma = ((i_1, \msg{a}_1), \dots, (i_k, \msg{a}_k))$ then $edge_i(t) := (f(i), \msg{a}_j, g(i))$ if $\sigma(j) = (i, \msg{a}_j)$ for some $j \in [k]$, and otherwise $edge_i(t) := \bot$.
Take an RB-System $\sysinst{n} = \tup{Q^n,Q^n_0,\Delta^n,\Comm}$, a path $\pi = t_1 t_2 \dots$ in $\sysinst{n}$, and a process $i$ in $\sysinst{n}$. Define $\proj{\pi}(i) := edge_i(t_{j_1}) edge_i(t_{j_2}) \dots$, where $j_1 < j_2 < \dots$ are all the indices $j$ for which $edge_i(t_j) \neq \bot$. Intuitively, $\proj{\pi}(i)$ is the path in $\proctemp$ taken by process $i$ during the path $\pi$. Define the set of {\em executions} $\exec$ of $\psyst$ to be the set of the runs of $\psyst$ projected onto a single process. Note that, due to symmetry, we can assume w.l.o.g. that the runs are projected onto process $1$. Formally, $\exec = \{ \proj{\pi}(1) \mid \pi \text{ is a run of } \psyst\}$.
We denote by $\execfin$ (resp. $\execinf$) the finite (infinite) executions in $\exec$.

For specifications $\Pspec$ (e.g., \LTL, NFWs) interpreted over infinite (resp. finite) words over the alphabet $S \times (\Actionprts \cup \{\brd\}) \times S$ of transitions,\footnote{In this way we can also capture atomic propositions on edges or states since these atoms may be pushed into the rendezvous label.}
 the {\em Parameterized Model Checking Problem} (PMCP) for $\Pspec$ is to decide, given a template $\proctemp$, and a specification $\varphi \in \Pspec$, if all executions in $\execinf$ (resp. $\execfin$) satisfy $\varphi$.

\head{Discrete Timed Networks}
We refer the reader to~\cite{ADM04} for a formal definition of timed networks. Here we describe the templates and informally describe the semantics.
Fix a set $C$ of \emph{clocks}.
A {\em timed network template} is a finite LTS $\tpl{Q,I,R,\Actionprts}$.
We associate to each letter $\msg{a}_i \in \Actionprts$ a {\em command} $r(\msg{a}_i) \subseteq C$ and a {\em guard} $p(\msg{a}_i)$.
A guard $p$ is a Boolean combination of predicates of the form $c \bowtie x$ where $c \in \Nat$ is a constant, $x \in \clocks$ is a clock, and ${\bowtie} \in \{<,=\}$.

Intuitively, a discrete timed network consists of the parallel composition of $n \in \Nat$ template processes, each running a copy of the template.
Each copy has a local state $(q,t)$, where $q \in \timedStates$ and $t:\clocks \to \Nat$.
A rendezvous action $\msg{a}$ is \emph{enabled} if there are $k$ processes in local states $(q_i,t_i)$ ($i \in [k]$) and there are edges $(q_i,\msg{a}_i,q'_i) \in R$ such that the clocks $t_i$ satisfy the guards $p(\msg{a}_i)$.
The rendezvous action is \emph{taken} means that the $k$ processes change state (to $q'_i$) and each of the clocks in $r(\msg{a}_i)$ is reset to $0$.
The network evolves non-deterministically, in steps: either all clocks advance by one time unit (so every $t(c)$ increases by one)\footnote{Alternatively, as in \cite{ADM04}, one can let time advance by any amount.} or a rendezvous action $\msg{a} \in \Actionprts$ is taken.
For a timed network template $T$ let $\psysttimed^n$ denote the timed network composed of $n \in \Nat$ templates $T$ and let $\psysttimed$ denote the union of the networks $\psysttimed^n$ for $n \in \Nat$.

Given a timed network template $T$ one can build an equivalent RB-template $P$, i.e., $\exec = \execLimited$.
The key insight is that the passage of time, that causes all clocks to advance by one time unit, is simulated by symmetric broadcast, and timed-guards are pushed into the template states.
The RB-system $\psyst$ requires only a finite number of states since clock values bigger than the greatest constant appearing on the guards are collapsed to a single abstract value (cf.~\cite{ADM04}).

\head{Useful lemmas}
We state a few simple but useful lemmas. 
The first ``RB-System Composition'' lemma states that, by partitioning processes of an RB-system into independent groups, a system with many processes can simulate in a single run multiple runs of smaller systems. If the simulated paths contain no broadcasts then the transitions of the simulated paths can be interleaved in any order. Otherwise, all simulated runs must have the same number of broadcasts, and the simulations of all the edges before the $i$'th broadcast on each simulated path must complete before taking the $i$'th broadcast on the simulating combined path.

\begin{lemma}
\label{lem: rb-system composition}
  A system $\sysinst{n}$ can, using a single run, partition its processes into groups each simulating a run of a smaller system. All simulated paths must have the same number of broadcasts.
\end{lemma}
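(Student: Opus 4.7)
The plan is to simulate runs $\pi_1, \ldots, \pi_m$ of smaller systems $\sysinst{n_1}, \ldots, \sysinst{n_m}$ (with $\sum_j n_j \leq n$) inside a single run of $\sysinst{n}$ by partitioning the $n$ processes into disjoint groups $G_1, \ldots, G_m$ of sizes $n_j$, relabelling each $\pi_j$ so that it acts on the processes of $G_j$ (legal by the flexible naming clause in the definition of $\sysinst{n}$), and then interleaving the rendezvous transitions of the $\pi_j$'s between globally synchronised broadcasts.

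Concretely, I would slice each $\pi_j$ at its broadcast transitions into blocks $\pi_j = \alpha_j^0 \beta_j^1 \alpha_j^1 \beta_j^2 \alpha_j^2 \cdots$, where each $\alpha_j^i$ is a (possibly empty) sequence of rendezvous transitions and each $\beta_j^i$ is a broadcast. The combined run is then built phase by phase: in phase $i$, first execute the rendezvous blocks $\alpha_1^i \alpha_2^i \cdots \alpha_m^i$ in some order, then perform a single global broadcast $\beta^i$ in which each process $p \in G_j$ takes the broadcast edge it takes in $\beta_j^i$.

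To show this yields a valid run of $\sysinst{n}$, I would prove by induction on $i$ that just before $\beta^i$ each process $p \in G_j$ is in the same local state as at the corresponding position of $\pi_j$. The step for the rendezvous blocks uses the fact that a rendezvous transition involves $k$ processes drawn from a single $G_j$; hence blocks belonging to different groups commute, and any interleaving preserves enabling and effect. The step for $\beta^i$ is then immediate: each $p \in G_j$ is in the source state of its prescribed broadcast edge, and by assumption every state of $\proctemp$ has an outgoing broadcast edge, so the tuple of local broadcast edges is a legitimate global transition of $\sysinst{n}$. The ``same number of broadcasts'' hypothesis is precisely what this construction needs, since a global broadcast moves every process at once and forces the $i$-th broadcast of one group to coincide with the $i$-th broadcast of every other group.

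The main obstacle I anticipate is the infinite case. When the common broadcast count is $\infty$, every $\alpha_j^i$ is finite (each broadcast of $\pi_j$ is reached after finitely many transitions), so phase-by-phase concatenation yields a well-defined $\omega$-run. When the count is finite but some $\pi_j$ has an infinite tail of rendezvous transitions after its last broadcast, I would append a fair round-robin interleaving of these tails so every transition eventually occurs; the commutation argument still applies to any finite prefix, yielding a valid $\omega$-run of $\sysinst{n}$. The remaining bookkeeping is routine, and the unused $n - \sum_j n_j$ processes can be assigned to take self-loops on the broadcast letter, which is why the construction works for any $n$ at least as large as the total number of simulated processes.
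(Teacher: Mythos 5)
Your proposal is correct and matches the paper's own (informal) argument: partition the processes into disjoint groups, slice each simulated path at its broadcasts, interleave the rendezvous blocks freely between globally synchronised broadcasts, and use the fact that rendezvous transitions from different groups commute. One tiny wording slip: idle processes need not have broadcast \emph{self-loops}; the template only guarantees that every state has \emph{some} outgoing broadcast edge, which is all the construction needs.
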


Consider now an RB-system $\sysinst{n}$, and two configurations $f,f'$ in it such that the number of processes in each state in $f$ is equal to that in $f'$, i.e., such that $|f^{-1}(s)| = |f^{\prime -1}(s)|$ for every $s \in S$. We call $f,f'$ {\em twins}.
A finite path $\pi$ of length $m$ for which $\src(\pi_1)$ and $\dst(\pi_m)$ are twins is called a {\em pseudo-cycle}.
For example, for $\proctemp$ in Figure~\ref{fig: process_example}, the following path in $\sysinst{4}$ is a pseudo-cycle that is not a cycle: $(p,q,q,r) \xrightarrow{((3,\msg{c}_1), (4,\msg{c}_2))} (p,q,r,p) \xrightarrow{((2,\msg{c}_1), (3,\msg{c}_2))} (p,r,p,p) \xrightarrow{((3,\msg{a}_1), (4,\msg{a}_2))} (p,r,q,q)$.

\begin{figure}[!htb]
           \begin{floatrow}
             \ffigbox{
\begin{tikzpicture}[->,>=latex,node distance=0.4cm,bend angle=25,auto]

\tikzset{every state/.style={circle,minimum size=.4cm,inner sep=0cm}}
\tikzset{every edge/.append style={font=\small}}

\node[initial, state] (t1) {$p$};
\node[state] (t2) [right= of t1] {$q$};
\node[initial, state] (t3) [below= of t1] {$r$};

\path (t1) edge [bend left] node {{$\msg{a}_1$}} (t2);
\path (t1) edge [bend right, below] node {$\msg{a}_2$} (t2);
\path (t3) edge [bend left, left] node {$\msg{c}_2$} (t1);
\path (t2) edge [bend left] node {$\msg{c}_1$} (t3);

\end{tikzpicture}

%
%
%
%
%
%
             \ffigbox{
\begin{tikzpicture}[->,>=latex,node distance=0.6cm,bend angle=25,auto]

\tikzset{every state/.style={circle,minimum size=.4cm,inner sep=0cm}}
\tikzset{every edge/.append style={font=\small}}

\node[initial,state] (t1) {$p$};
\node[state] (t2) [right= of t1] {$q$};

\path (t1) edge [loop above] node {{$\msg{a}_1$}} (t1);
\path (t1) edge [loop below] node {{$\brd$}} (t1);
\path (t1) edge [bend left] node {{$\msg{a}_2$}} (t2);
\path (t2) edge [bend left, below] node {$\brd$} (t1);

\end{tikzpicture}

%
%
%
%
%
%
           \end{floatrow}
             \ffigbox{\includegraphics[scale = 0.4]{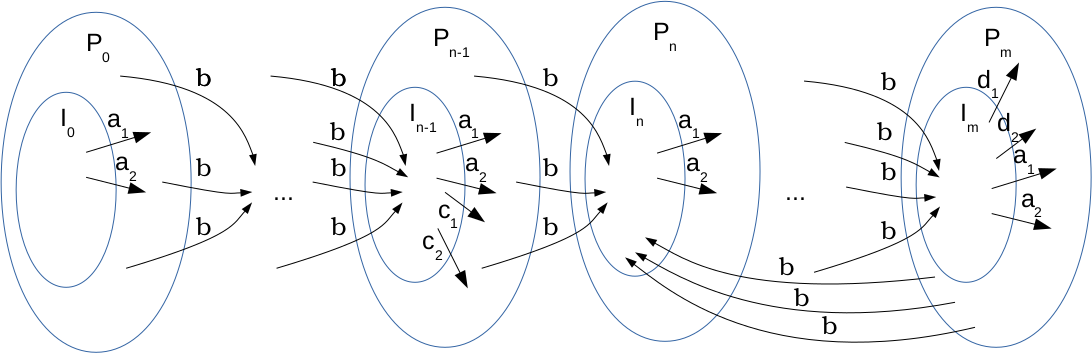}}{\caption{A high level view of the reachability-unwinding lasso.}\label{fig:lasso}}
           
\end{figure}

\begin{lemma}\label{lem: psuedo cycle pumping}
    By renaming processes after each iteration, a pseudo-cycle $\pi$ can be pumped to an infinite path which repeatedly goes through the actions on $\pi$.
\end{lemma}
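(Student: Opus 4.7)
The key observation is that RB-systems are symmetric under permutations of process identifiers: any bijection applied uniformly to all process names in a run yields another valid run, since both broadcast transitions (which involve every process) and rendezvous transitions (which are determined only by the set and actions of participating processes) are invariant under such relabeling. The plan is to exploit this symmetry to close the gap between the final twin configuration and the initial one.

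First, I would unpack the twin condition: since $|f^{-1}(s)| = |f'^{-1}(s)|$ for every $s \in S$, where $f = \src(t_1)$ and $f' = \dst(t_m)$ are the endpoints of $\pi$, there exists a bijection $\sigma : [n] \to [n]$ such that $f'(\sigma(i)) = f(i)$ for every process $i$. Intuitively, $\sigma$ tells us how to rename processes in $f'$ so that we recover $f$.

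Next, I would define the infinite path inductively. Let $\pi^{(1)} := \pi$. Assuming $\pi^{(j)}$ has been defined and its last configuration is $f$ (as a function from some index set to $S$), take $\pi^{(j+1)}$ to be the path obtained from $\pi$ by replacing every occurrence of a process identifier $i$ in its rendezvous tuples by a fresh name $\sigma^{(j)}(i)$, where $\sigma^{(j)}$ is chosen so that the starting configuration of $\pi^{(j+1)}$ equals the ending configuration of $\pi^{(j)}$; broadcast transitions need no renaming since they already mention all processes. By the symmetry of $\psyst$, $\pi^{(j+1)}$ is a valid path. Concatenating $\pi^{(1)} \pi^{(2)} \pi^{(3)} \cdots$ yields an infinite path that, step by step, executes exactly the same sequence of rendezvous and broadcast actions as $\pi$ does, only on differently named processes.

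The only substantive point is the permutation invariance, which is immediate from the definition of $\Delta^n$; the flexible naming convention mentioned just after the definition of $Q^n$ allows the successive $\pi^{(j)}$ to live naturally in $\psyst$. I do not foresee a genuine obstacle here; the main care is bookkeeping the renamings $\sigma^{(j)}$ so that the concatenation is well-defined as a single run of $\psyst$.
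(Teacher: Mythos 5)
Your proposal is correct and follows exactly the argument the paper intends (the lemma's own phrasing, ``by renaming processes after each iteration,'' is precisely your construction): the twin condition yields a bijection matching the end configuration to the start configuration, and permutation invariance of $\Delta^n$ lets each renamed copy of $\pi$ be appended as a valid continuation. One minor wording caveat: the renaming should be a bijection onto the existing $n$ process names (so the concatenation stays a path of $\sysinst{n}$ on a consistent index set), not genuinely ``fresh'' names, but your requirement that the start of $\pi^{(j+1)}$ equal the end of $\pi^{(j)}$ already enforces this.
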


\section{The Reachability-Unwinding of a Process Template}\label{sec: unwinding}
Given template $\proctemp = \tup{S,I,R,\Actionprts \cup \{\brd\}}$, our goal in this section is to construct a new process template $\procuwd = \tup{S^\uwd,I^\uwd,R^\uwd,\Actionprts \cup \{\brd\}}$, called the {\em reachability-unwinding}
of $\proctemp$, see Figure~\ref{fig:lasso}.
The template $\procuwd$ will play a role in all our algorithms for solving the PMCP of RB-systems. Intuitively, $\procuwd$ is obtained by alternating the following two operations: \emph{(i)} taking a copy of $\proctemp$ and removing from it all unreachable rendezvous edges; and \emph{(ii)} unwinding on broadcast edges.
This is repeated until a copy is created which is equal to a previous one, we then stop and close the unwinding back into the old copy, forming a high-level lasso structure.

Technically, it is more convenient to first calculate all the desired copies and then to arrange them in the lasso. Thus, we first calculate, for $0 \leq i \leq m$ (for an appropriate $m$), an R-template $\proctemp_i = \tup{S_i,I_i,R_i,\Actionprts}$ which is a copy of $\proctemp$ with initial states redesignated and all broadcast edges, plus some rendezvous edges, removed.
Second, we take $\proctemp_0, \dots, \proctemp_m$ and combine them, to create the single process template $\procuwd$, by connecting the states in $\proctemp_i$ with the initial states of $\proctemp_{i+1}$ ($\proctemp_n$ for $i=m$, where $n \le m$ is determined by the lasso structure) with broadcast edges, as naturally induced by $\proctemp$.

Construct the R-template $\proctemp_i = \tup{S_i,I_i,R_i,\Actionprts}$ (called the $i$'th {\em component} of $\procuwd$) recursively: for $i=0$, we let $I_0 := I$; and for $i > 0$ we let $I_i := \{ s \in S \mid (h,\brd,s) \in R \text{ for some } h \in S_{i-1} \}$ be the set of states reachable from $S_{i-1}$ by a broadcast edge. The elements $S_i$ and $R_i$ are obtained using the following {\em saturation} algorithm, which is essentially a breadth-first search: start with $S_i := I_i$ and $R_i := \emptyset$; at each round of the algorithm, consider in turn each edge $e=(s, \msg{a}_h, t) \in R \setminus R_i$; if for every $l \in [k]$ there is some edge $(s', \msg{a}_l, t') \in R$ with $s' \in S_i$, then add $e$ to $R_i$ and add $t$ (if not already there) to $S_i$. The algorithm ends when a fixed-point is reached. Observe a property of this algorithm: if $(s,\msg{a}_h,t) \in R_i$ then for all $l \in [k] \setminus \{h\}$ there exists $s',t' \in S_i$ such that $(s',\msg{a}_l,t') \in R_i$.

Now, $\proctemp_i$ is completely determined by $I_i$ (and $\proctemp$), and so there are at most $2^{|S|}$ possible values for it. Hence, for some $n \leq m < 2^{|S|}$ it must be that $\proctemp_n = \proctemp_{m+1}$. We stop calculating $\proctemp_i$'s when this happens since for every $i \in \NatZero$ it must be that $\proctemp_i = \proctemp_{n + ((i-n)\mod r)}$, where $r=m+1-n$. We call $n$ the {\em prefix length} of $\procuwd$ (usually denoted by $\psi$), call $r$ the {\em period} of $\procuwd$, and for $i \in \NatZero$, call $n + ((i-n) \mod r)$ the {\em associated component number} of $i$, and denote it by $\comp(i)$.

We now construct from $\proctemp_0, \dots, \proctemp_m$ the template $\procuwd = \tup{S^\uwd,I^\uwd,R^\uwd,\Actionprts \cup \{\brd\}}$, as follows: \emph{(i)} $S^\uwd := \cup_{i=0}^m (S_i \times \{i\})$; \emph{(ii)} $I^\uwd := I_0 \times \{0\}$  (recall that we also have $I_0 = I$); \emph{(iii)} $R^\uwd$ contains the following transitions: the rendezvous transitions $\cup_{i=0}^m \{ ((s,i), \varsigma, (t,i)) \mid  (s, \varsigma, t) \in R_i \}$, and the broadcast transitions  $\cup_{i=0}^{m-1} \{ ((s,i), \brd, (t,i+1)) \mid  (s, \brd, t) \in R \text{ and } s \in S_i  \}$
and $\{ ((s,m), \brd, (t,n)) \mid  (s, \brd, t) \in R \text{ and } s \in S_m \}$.

We will abuse notation, and talk about the component $P_i$, referring sometimes to $P_i$ as defined before (i.e., without the annotation with $i$), and sometimes to the part of $\procuwd$ that was obtained by annotating the elements of $P_i$ with $i$.

Observe that, by projecting out the component numbers (we will denote this projecting by superscript $\rwd$) from states in $\procuwd$ (i.e., by replacing $(s,i) \in S^\uwd$ with $s \in S$), states and transitions in $\procuwd$ induce states and transitions in $\proctemp$. 
Similarly, paths and runs in $\psyst^\uwd$ can be turned into paths and runs in $\psyst$.
We claim that also the converse is true, i.e., that by adding component numbers, states and transitions in $\proctemp$ can be lifted to ones in $\procuwd$; and that by adding the correct (i.e., reflecting the number of previous broadcasts) component numbers to the states of the transitions of a run in $\psyst$, it too can be lifted to a run in $\psyst^\uwd$.
However, a path in $\psyst$ that is not a run (i.e., that does not start at an initial configuration), may not always be lifted to a path in $\psyst^\uwd$ due to the removal of unreachable edges in the components making up $\procuwd$.

The next lemma says that we may work with template $\procuwd$ instead of $P$.
\begin{lemma}\label{lem: runs can be lifted to unwinding}
  For every $n \in \Nat$, we have that $\runs(\sysinst{n}) = \{ \rho^\rwd \mid \rho \in \runs((\psyst^\uwd)^n) \}$.
\end{lemma}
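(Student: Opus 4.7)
The plan is to prove the two inclusions separately.

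For the direction $\supseteq$, the plan is to observe that the $\rwd$-projection sends edges of $\procuwd$ to edges of $\proctemp$ and initial states to initial states: every rendezvous edge in $R_i$ is by construction a copy of an edge in $R$, and every broadcast edge $((s,i), \brd, (t, i'))$ of $R^\uwd$ was added precisely because $(s, \brd, t) \in R$ (both for $i' = i+1$ and for the wrap-around $i=m, i'=n$); moreover $I^\uwd = I_0 \times \{0\}$ and $I_0 = I$. Thus the pointwise $\rwd$-projection of any run of $(\psyst^\uwd)^n$ is visibly a run of $\sysinst{n}$, transition-by-transition. This direction is routine.

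For the harder direction $\subseteq$, the plan is to construct from a given run $\pi = t_1 t_2 \cdots$ of $\sysinst{n}$ a canonical lifting $\tilde\pi$ in $(\psyst^\uwd)^n$. The annotation is forced: in the configuration just before $t_j$, every process state $s$ is annotated with $\comp(b_j)$, where $b_j$ counts the broadcasts among $t_1,\dots,t_{j-1}$. I would then prove by induction on $j$ that (i) every annotated state in fact lies in $S^\uwd$ (i.e., each process state $s$ actually belongs to $S_{\comp(b_j)}$), and (ii) the annotated transition $t_j$ is a legal global transition of $(\psyst^\uwd)^n$. The base case uses $I = I_0 \subseteq S_0$. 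The inductive step splits on whether $t_j$ is a broadcast -- where one invokes $I_{\comp(b_j)+1} := \{s \mid (h,\brd,s) \in R,\, h \in S_{\comp(b_j)}\}$ to land the destination states in the next component and reads off the required edges from the construction of $R^\uwd$ -- or a rendezvous, which is the subtle case. Once both invariants are established, it is immediate that $\tilde\pi^\rwd = \pi$.

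The main obstacle is the rendezvous case of the inductive step. Here the plan is to leverage the closure property of the saturation algorithm: whenever a $k$-wise rendezvous on action $\msg{a}$ is actually realizable in $\sysinst{n}$ from states all lying in $S_i$, every one of its participating edges is guaranteed to end up in $R_i$. Concretely, if the rendezvous uses edges $(s_l, \msg{a}_l, t_l) \in R$ for $l \in [k]$ with $s_l \in S_{\comp(b_j)}$ (by the IH), then for every role $l$ there exists an $\msg{a}_l$-edge sourced in $S_{\comp(b_j)}$, witnessed by the other rendezvousing processes themselves; this is precisely the admission criterion of the saturation algorithm, so each $(s_l, \msg{a}_l, t_l)$ is added to $R_{\comp(b_j)}$ and each destination $t_l$ is added to $S_{\comp(b_j)}$. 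This simultaneously legalises the lifted transition in $R^\uwd$ and preserves invariant (i) across the rendezvous, closing the induction. Beyond this bookkeeping, no further combinatorial work is needed.
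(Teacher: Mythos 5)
Your proposal is correct and follows essentially the same route the paper intends: project for $\supseteq$, and for $\subseteq$ lift by annotating each state with the component $\comp(b)$ determined by the number of preceding broadcasts, using the fixed-point property of the saturation algorithm (all $k$ rendezvous edges have sources in $S_{\comp(b)}$ by the induction hypothesis, so all are admitted into $R_{\comp(b)}$ and their targets into $S_{\comp(b)}$) for the rendezvous case and the definition of $I_{i+1}$ for the broadcast case. No gaps.
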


The following lemma says, intuitively, that for every component $P_i$ there is a run of $\psyst^\uwd$ that ``loads'' arbitrarily many processes into every state of $P_i$.

\begin{lemma}
\label{lem: loading}
For all $b,n \in \Nat$ there is a finite run $\pi$ of $\psyst^\uwd$ with $b$ broadcasts, s.t., $|f^{-1}(s)| \geq n$ for all states $s$ in the component $\proctemp_{\comp(b)}$, where $f = \dst(\pi)$. \qed
\end{lemma}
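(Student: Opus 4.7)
The plan is to prove the lemma by induction on $b$. Observe first that because every broadcast edge of $\procuwd$ leaves component $\comp(b)$ only into states of $I_{\comp(b+1)}$, at every point of a run of $\psyst^\uwd$ all processes share a single component; after the $b$-th broadcast they all lie on states of $I_{\comp(b+1)}$ (or in $I_0$ initially).

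\textbf{Base case $b=0$.} We need a rendezvous-only run of $\psyst^\uwd$ placing $\geq n$ processes on every state of $\proctemp_0$. We prove, by an inner induction on the stages of the saturation algorithm that builds $\proctemp_0$, the following strengthening: for every stage $r$ and every target count $M$, there is a rendezvous-only run whose final configuration carries $\geq M$ processes on each state added to $S_0$ by stage $r$. At stage $0$ we pick an initial configuration with $M$ copies of each state of $I_0$ and make no moves. For the step, invoke the inner IH with a much larger $M'$, and then for each edge $e=(s,\msg{a}_h,t)$ newly added in stage $r{+}1$ fire $e$ exactly $M$ times. The closure property already observed for the saturation (if $(s,\msg{a}_h,t)\in R_0$ then witness edges $(s',\msg{a}_l,t')\in R_0$ with $s',t'\in S_0$ exist for every $l\neq h$) supplies the $k-1$ helpers per firing from witness-source states that have been loaded by the inner IH. Choosing $M'$ on the order of $M(1 + k|R_0|)$ guarantees that no source state is depleted below $M$, while every newly added target state, being the destination of at least one freshly fired edge, gains at least $M$ processes.

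\textbf{Inductive step $b\to b+1$.} Apply the outer IH with the parameter $M := n \cdot |I_{\comp(b+1)}|$ to obtain a run $\pi_b$ with $b$ broadcasts whose final configuration has $\geq M$ processes at every state of $\proctemp_{\comp(b)}$. Extend $\pi_b$ by a single global broadcast. Since each state $t \in I_{\comp(b+1)}$ is by construction the target of some broadcast edge $(s,\brd,t)$ with $s \in S_{\comp(b)}$, and each process at $s$ may independently pick any broadcast edge outgoing from $s$, we can schedule the broadcast so that every $t \in I_{\comp(b+1)}$ receives $\geq n$ processes; this is feasible even when all of $I_{\comp(b+1)}$ share a single predecessor because the budget $M = n|I_{\comp(b+1)}|$ suffices to send $n$ processes toward each target. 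After the broadcast, $\geq n$ processes sit on every state of $I_{\comp(b+1)}$ and all processes are in component $\comp(b+1)$. Finally, re-apply the inner induction of the base case inside component $\proctemp_{\comp(b+1)}$, extending the run with further rendezvous until $\geq n$ processes occupy every state of $S_{\comp(b+1)}$ while adding no broadcasts.

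\textbf{Main obstacle.} The delicate part is the bookkeeping in the inner induction: one must be sure that when a newly saturated rendezvous is fired, its helpers are still present at the witness-source states rather than having been consumed by prior firings. This is handled by the generous budgeting above, where $M'$ is chosen so that the total number of firings (bounded by $M\cdot|R_0|$) cannot deplete any state below $M$. As a cleaner alternative one can invoke Lemma~\ref{lem: rb-system composition} to run, in parallel, one loading sub-run per target state of $S_0$; since the base case involves no broadcasts, the composition constraint that all sub-runs share the same broadcast count is vacuous.
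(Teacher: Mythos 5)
Your overall strategy---an outer induction on the number of broadcasts $b$, combined with an inner induction along the saturation order to load each component, using the observed closure property of the saturation algorithm to supply the $k-1$ rendezvous partners---is the argument this lemma calls for; the paper omits the proof, but it sets up Lemma~\ref{lem: rb-system composition} and the saturation property immediately beforehand precisely for this purpose, and your structural observations (all processes always occupy a single component; every state of $I_{\comp(b+1)}$ is a broadcast-successor of some state of $S_{\comp(b)}$; every non-initial state of $S_i$ is the target of an edge of $R_i$) are correct and are the essential content.

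However, two of your quantitative claims fail as written. First, in the inner induction the budget $M'=M(1+k|R_0|)$ together with ``fire each newly added edge exactly $M$ times'' breaks when a state added in stage $r{+}1$ is the source of a partner edge of an action added later in the \emph{same} stage. For instance, with $k=2$, let action $\msg{a}$ have edges $s\to t$ and $s\to u$, and let action $\msg{b}$ (addable in the same round once $t,u\in S_0$) have edges $t\to v$ and $u\to w$: firing $\msg{a}$ exactly $M$ times places only $M$ processes on each of $t$ and $u$, and firing $\msg{b}$ $M$ times then empties them, so the final configuration has $0$ processes on $t$ and $u$. The repair is to induct one edge at a time in saturation order with a multiplicatively compounding budget (e.g.\ $(k+1)M$ per step, hence $(k+1)^{|R_0|}M$ at the start), or---better---to promote your closing remark to the main argument: build, for each target state, a broadcast-free sub-run depositing one process there (by recursion on the saturation order) and compose $n\cdot|S_0|$ such sub-runs via Lemma~\ref{lem: rb-system composition}, which is vacuously applicable since no broadcasts occur. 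Second, in the step $b\to b+1$ the outer IH must be invoked not with $n\cdot|I_{\comp(b+1)}|$ but with $M''\cdot|I_{\comp(b+1)}|$, where $M''$ is the starting budget the inner induction on component $\comp(b+1)$ requires in order to \emph{finish} with $n$ processes everywhere: with only $n$ processes per state of $I_{\comp(b+1)}$ after the broadcast, the subsequent loading of $S_{\comp(b+1)}$ would again deplete the initial states below $n$. Both defects are repaired simply by choosing the constants generously, so the proof is salvageable, but as stated the arithmetic does not support the conclusions drawn from it.
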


The following lemma states that the set of finite executions of the RB-system $\psyst$ is equal  to the set of finite  {\em runs of the process template} $\procuwd$ (modulo component numbers).
This is very convenient since, whereas $\psyst$ is infinite, $\procuwd$ is finite.
Unfortunately, when it comes to infinite executions of $\psyst$ we only get that they are contained in (though in many cases not equal to) the set of infinite runs of $\procuwd$.
This last observation is also true for $\proctemp$: consider for example Figure~\ref{fig: process not regular} without the $\brd$ edges, and an infinite repetition of the self loop.

\begin{lemma}\label{lem: unwinding captures executions}
  $\execfin[\psyst] = \{\pi^\rwd \mid \pi \in \runs(\procuwd), |\pi| \in \Nat \}$; and $\execinf[\psyst] \subseteq \{\pi^\rwd \mid \pi \in \runs(\procuwd), |\pi| = \infty \}$
  \end{lemma}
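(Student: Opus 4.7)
The plan is to establish the two forward inclusions (finite and infinite) uniformly via Lemma~\ref{lem: runs can be lifted to unwinding}, and to prove the reverse inclusion for finite runs by an explicit construction driven by Lemma~\ref{lem: loading} and the saturation property of the components $\proctemp_i$.

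\emph{Forward direction.} Let $e \in \exec$. By definition $e = \proj{\rho}(1)$ for some run $\rho$ of $\sysinst{n}$. Lemma~\ref{lem: runs can be lifted to unwinding} produces a run $\tilde\rho$ of $(\psyst^\uwd)^n$ with $\tilde\rho^\rwd = \rho$; since $\tilde\rho$ starts at an initial configuration, process $1$'s initial state is an initial state of $\procuwd$, so $\pi := \proj{\tilde\rho}(1)$ is a run of $\procuwd$ of length $|e|$. Projection onto process $1$ commutes with projecting out component numbers, so $\pi^\rwd = \proj{\tilde\rho^\rwd}(1) = \proj{\rho}(1) = e$. This simultaneously witnesses $\execfin \subseteq \{\pi^\rwd : \pi \in \runs(\procuwd), |\pi| \in \Nat\}$ and $\execinf \subseteq \{\pi^\rwd : \pi \in \runs(\procuwd), |\pi| = \infty\}$.

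\emph{Reverse direction (finite).} Fix a finite run $\pi$ of $\procuwd$ with $B$ broadcasts, decomposed as $\pi_0\,\beta_1\,\pi_1\,\beta_2\,\cdots\,\beta_B\,\pi_B$, where each $\beta_j$ is a broadcast edge and each $\pi_j$ is a (possibly empty) sequence of rendezvous edges inside component $\proctemp_{\comp(j)}$. It suffices, by the other direction of Lemma~\ref{lem: runs can be lifted to unwinding}, to construct a finite run $\tilde\rho$ of $(\psyst^\uwd)^N$, for a suitably large $N$, such that $\proj{\tilde\rho}(1) = \pi$. I build $\tilde\rho$ in alternating \emph{load} and \emph{execute} phases, one per segment $\pi_j$: first load $\proctemp_{\comp(j)}$ so that every state of $S_{\comp(j)}$ holds much more than $(k-1)|\pi_j|$ processes, while keeping process $1$ idle; then have process $1$ traverse the edges of $\pi_j$, noting that the saturation property of $R_{\comp(j)}$ guarantees that, for every rendezvous edge $(s,\msg{a}_h,t)$ in $\pi_j$ and every $l\neq h$, an edge $(s_l,\msg{a}_l,t_l) \in R_{\comp(j)}$ exists whose source $s_l$ currently holds many loaded processes, one of which witnesses the rendezvous; finally, if $j < B$, take the global broadcast corresponding to $\beta_{j+1}$ and proceed. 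For phase $0$ the loading is an immediate application of Lemma~\ref{lem: loading} with $b=0$ to auxiliary processes, while process $1$, participating in no rendezvous and having no broadcast to follow, remains at its initial state.

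\emph{Main obstacle.} The delicate step is the load at the start of each phase $j > 0$, since Lemma~\ref{lem: loading} is phrased for runs starting at an initial configuration of $\psyst^\uwd$, whereas here we begin from a mid-run configuration. The key observation is that right after $\beta_j$ all processes lie in initial states of $\proctemp_{\comp(j)}$ (by construction $I_{\comp(j)}$ is exactly the set of broadcast-targets out of $S_{\comp(j-1)}$), and the saturation algorithm guarantees that every state of $S_{\comp(j)}$ is reachable from $I_{\comp(j)}$ by rendezvous edges in $R_{\comp(j)}$; hence the same inductive argument underlying Lemma~\ref{lem: loading} can be replayed starting from the post-$\beta_j$ configuration. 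To control counts, one picks $N$ uniformly in terms of $\sum_j |\pi_j|$ so that enough auxiliary processes survive each phase's loading and execution to seed the next. The containment fails strictly in the infinite case in general, as illustrated by Figure~\ref{fig: process not regular}: the single-process infinite run repeating the $\msg{a}_1$ self-loop at $p$ belongs to $\runs(\procuwd)$ but not to $\execinf$, because each such rendezvous consumes a witness into $q$ that can only be recycled by a broadcast, so no finite $\sysinst{N}$ realises it.
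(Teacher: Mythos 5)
Your proof is correct and follows the route the paper's surrounding lemmas are set up for: the two forward inclusions via Lemma~\ref{lem: runs can be lifted to unwinding} together with the commutation of process-$1$ projection with removing component numbers, and the reverse finite inclusion by loading each traversed component so that the saturation property always supplies rendezvous partners for process $1$. Your explicit handling of the one delicate point --- that Lemma~\ref{lem: loading} must be applied in its inductive, mid-run form to reload every component after each broadcast, not just the final one --- is exactly the strengthening the construction needs, and your witness for strictness of the infinite-case inclusion matches the paper's own discussion of Figure~\ref{fig: process not regular}.
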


\head{Solving PMCP for regular specifications}
Given $\proctemp = \tup{S,I,R,\Actionprts \cup \{\brd\}}$, let $\execNFW$ denote the reachability-unwinding $\procuwd$ viewed as an automaton (NFW), with all states being accepting states, and transitions $e$ are labeled $e^\rwd$ (i.e., they have the component number removed). Formally, $\execNFW = \tup{R,S^\uwd,I^\uwd,R^\prime,S^\uwd}$, so the input alphabet of $\execNFW$ is $R$ (the transition relation of $\proctemp$), and $R^\prime := \{ (s, (s^\rwd,\sigma,t^\rwd), t) ~ | ~ (s, \sigma, t) \in R^\uwd \}  \subseteq S^\uwd \times R \times S^\uwd$. Hence:

\begin{theorem}\label{finite RB PMCP is in pspace}
  The PMCP of RB-systems (resp. discrete timed networks) for regular specifications is in \PSPACE\ (resp. \EXPSPACE)
\end{theorem}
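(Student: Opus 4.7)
The plan is to exploit Lemma~\ref{lem: unwinding captures executions}, which identifies $\execfin[\psyst]$ with $L(\execNFW)$: the NFW $\execNFW$ has all states accepting and its edges are labelled by the $\rwd$-projected transitions of $\procuwd$, so it accepts exactly $\execfin[\psyst]$. A regular specification $\varphi$ is thus satisfied by every finite execution iff $L(\execNFW) \subseteq L(\varphi)$, i.e.\ iff $L(\execNFW) \cap L(\overline{\varphi}) = \emptyset$. Assuming (w.l.o.g.) that $\varphi$ is given as an NFW, I would use the classical on-the-fly nondeterministic algorithm for NFW non-containment: guess a witness word letter by letter, simulate one state of $\execNFW$ on it, maintain the subset-construction of $\varphi$, and accept as soon as $\execNFW$ is in an (automatically) accepting state while the current $\varphi$-subset contains no accepting state. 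Savitch then delivers \PSPACE, provided each configuration uses only polynomial space.

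The main obstacle is that $\procuwd$ can have up to $2^{|S|}$ components and therefore cannot be stored explicitly. I would circumvent this by representing each state of $\execNFW$ implicitly as a pair $(I,s)$, where $I \subseteq S$ is the initial-state set of the current component $\proctemp_i$ and $s \in S_i$ is the current local state. From $I$ the saturated sets $S_i$ and $R_i$ can be recomputed in polynomial time by the breadth-first saturation of Section~\ref{sec: unwinding}, so a rendezvous successor $(I,s')$ (via any $(s,\varsigma,s') \in R_i$) or a broadcast successor $(I',t)$ (where $I'$ is the broadcast image of $S_i$ and $(s,\brd,t) \in R$ with $s \in S_i$) can be produced in polynomial space. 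The $\varphi$-subset is of course polynomial as well, so the whole configuration is polynomial in $|\proctemp| + |\varphi|$, placing the procedure in \PSPACE.

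For discrete timed networks I would invoke the translation sketched in the excerpt, which turns a network template $T$ into an RB-template $\proctemp$ whose state space collapses every clock value exceeding the largest constant appearing in the guards. The resulting $\proctemp$ has size at most exponential in $|T|$ and satisfies $\exec = \execLimited$, so the regular-specification check transports verbatim. Feeding $\proctemp$ into the \PSPACE\ algorithm above then yields an \EXPSPACE\ procedure for discrete timed networks, matching the claimed bound.
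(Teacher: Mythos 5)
Your proposal is correct and follows exactly the route the paper intends: Lemma~\ref{lem: unwinding captures executions} identifies $\execfin$ with $L(\execNFW)$, the PMCP becomes a language-containment check, and the only nontrivial point --- that $\execNFW$ may have exponentially many components but each component $\proctemp_i$ is determined by its initial-state set $I_i$, so states can be represented implicitly as pairs $(I,s)$ and successors computed on the fly --- is precisely how the \PSPACE\ bound is obtained, with the \EXPSPACE\ bound for timed networks following from the exponential blow-up of the clock-region translation. The paper leaves these details implicit (the theorem is stated with only a ``Hence''), so your write-up simply makes the intended argument explicit.
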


\section{Solving PMCP of Liveness Specifications}\label{sec: PMCP infinite}

In this section we show how to solve the PMCP for specifications concerning infinite executions.
We begin with the following lemma showing that, if we want to use the automata theoretic approach, classical automata models (e.g. B\"{u}chi, Parity) are not up to the task.

\begin{lemma}\label{lem: infinite exec no regular}
  There is a process template $\proctemp$ such that $\execinf$ is not $\omega$-regular.
\end{lemma}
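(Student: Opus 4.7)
The plan is to exhibit an explicit RB-template whose set of infinite executions, when intersected with a simple $\omega$-regular language, yields a language that is provably not $\omega$-regular. Take $\proctemp$ to be the template in Figure~\ref{fig: process not regular}: states $p$ (initial) and $q$, with rendezvous edges $p \xrightarrow{\msg{a}_1} p$ and $p \xrightarrow{\msg{a}_2} q$, and broadcast edges $p \xrightarrow{\brd} p$ and $q \xrightarrow{\brd} p$. Write $e_1 := (p, \msg{a}_1, p)$ and $e_b := (p, \brd, p)$. I will show that $\execinf \cap \{e_1, e_b\}^\omega = L'$, where $L'$ consists of all $\omega$-words of the form $e_1^{k_0} e_b\, e_1^{k_1} e_b \cdots$ with $\sup_i k_i < \infty$, and then that $L'$ is not $\omega$-regular; since $\omega$-regular languages are closed under intersection, this forces $\execinf$ itself to be non-$\omega$-regular.

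For the first step, the $\supseteq$ direction is obtained by playing inside $\sysinst{K+1}$ where $K = \sup_i k_i$: process $1$ performs $k_i$ consecutive rendezvous moves on $\msg{a}$, each matched by a different partner moving from $p$ to $q$ via $\msg{a}_2$, and the ensuing broadcast returns every partner from $q$ back to $p$ while process $1$ takes its own $p \xrightarrow{\brd} p$ self-loop. For the $\subseteq$ direction, note that a process in state $q$ has no enabled rendezvous edge, so inside $\sysinst{n}$ process $1$ can take at most $n-1$ consecutive $\msg{a}_1$ moves before a broadcast becomes the only way to refresh the pool of partners in state $p$; hence $\sup_i k_i \leq n-1 < \infty$.

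The main obstacle is proving that $L'$ is not $\omega$-regular. I argue by contradiction: assume $L' = L(D)$ for a deterministic parity automaton $D$ with state set $Q$, $|Q| = m$, and priorities in $\{1,\ldots,M\}$. For each $n \geq 1$ the word $(e_1^n e_b)^\omega$ lies in $L'$, so $D$'s run on it settles, within $m$ many $(e_1^n e_b)$-blocks, into a cycle $r_n \xrightarrow{(e_1^n e_b)^{d_n}} r_n$ reached after a prefix $(e_1^n e_b)^{c_n}$ with $c_n, d_n \leq m$, and the set $P_n$ of priorities visited inside the cycle has $\max P_n$ even. Since $(r_n, c_n, d_n, P_n)$ ranges over a finite set, by pigeonhole there are an infinite $N \subseteq \Nat$ and a common tuple $(r, c, d, P)$ with $\max P$ even attained for every $n \in N$. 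Picking $n_1 < n_2 < \cdots$ from $N$, define
\[
  w \;:=\; (e_1^{n_1} e_b)^c \, (e_1^{n_2} e_b)^d \, (e_1^{n_3} e_b)^d \cdots .
\]
By determinism, the run of $D$ on $w$ reaches $r$ after $(e_1^{n_1} e_b)^c$, and each subsequent block $(e_1^{n_i} e_b)^d$ returns to $r$ while visiting exactly the priorities in $P$; hence the priorities visited infinitely often are precisely $P$, whose maximum is even, so $D$ accepts $w$. But the $e_1$-block lengths of $w$ are unbounded, so $w \notin L'$, a contradiction. Therefore $L'$, and hence $\execinf$, is not $\omega$-regular.
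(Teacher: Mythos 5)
Your proof is correct, and it uses the same witness as the paper: the template of Figure~\ref{fig: process not regular}, together with the observation that in $\sysinst{n}$ at most $n-1$ consecutive rendezvous transitions can occur between broadcasts, so that (restricted to the $\msg{a}_1$-self-loop and the $\brd$-self-loop) $\execinf$ is exactly the bounded-block language $L'$. Where you diverge is in the final step, showing that $L'$ is not $\omega$-regular. The paper argues via complementation: it intersects the complement of $\execinf$ with $\{\msg{a}_1,\brd\}^\omega$ and appeals to the fact that every nonempty $\omega$-regular language contains an ultimately periodic word. (As stated in the paper this is slightly terse, since that intersection does contain ultimately periodic words such as $\msg{a}_1^\omega$; one must additionally restrict to words with infinitely many broadcasts before the ultimately-periodic-word argument applies.) You instead give a direct pumping argument on a deterministic parity automaton: extract, for each $n$, the eventual cycle on $(e_1^n e_b)^\omega$ with its entry point, period, and priority set, apply pigeonhole to fix these data across infinitely many $n$, and splice blocks of strictly increasing length through the common cycle to produce an accepted word with unbounded blocks. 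This is longer and relies on determinization, but it is self-contained, sidesteps complementation entirely (and hence the subtlety above), and makes explicit exactly which closure property of $\omega$-regular languages fails for $\execinf$. Both routes are sound; the paper's is shorter once one grants the standard ultimately-periodic-word fact, yours is more robust to the bookkeeping of which words lie in the complement.
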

\begin{proof}
  Consider the process template given in Figure~\ref{fig: process not regular}. It is not hard to see that in every infinite run of $\sysinst{n}$ there may be at most $n-1$ consecutive rendezvous transitions before a broadcast transition, resetting all processes to state $1$, is taken. Overall, we have that $\execinf$ is the set of words of the form $\msg{a}_1^{n_1} \msg{a}_2^{m_1} \brd \allowbreak \msg{a}_1^{n_2} \msg{a}_2^{m_2} \brd \dots$, where $m_i \in \{0,1\}$ for every $i$, and $\limsup n_i < \infty$. This language is not $\omega$-regular since the intersection of its complement with $\{\msg{a}_1, \brd\}^\omega$ is not $\omega$-regular (because it contains no ultimately periodic words).
\qed \end{proof}

In light of Lemma~\ref{lem: infinite exec no regular}, we turn our attention to a stronger model, called BSW \cite{b10}.
Thus, we solve the PMCP for liveness specifications as follows: given a process template $\proctemp$, we show how to build a BSW $\execBSW$ accepting exactly the executions in $\execinf$.
Model checking of a specification given by a BSW $\A'$ accepting all undesired (i.e., bad) executions, is thus reduced to checking for the emptiness of the intersection of $\execBSW$ and $\A'$.

\head{Defining the Automaton $\execBSW$}
We now describe the structure of the BSW $\execBSW$ (in fact we define a BW) accepting exactly the executions in $\execinf$.

An important element in the construction is a classification of the edges in $\procuwd$ into four types: \bgood, \good, \sgood, and \bad. The \bad\ edges are those that appear at most finitely many times on any execution in $\execinf$. An edge is \bgood\ if it appears infinitely many times on some execution in $\execinf$ with finitely many broadcasts, but only finitely many times on every execution which has infinitely many broadcasts. An edge $e$ is \good\ if there is some run $\pi \in \execinf$ with infinitely many broadcasts on which $e$ appears unboundedly many times between broadcasts, i.e., if for every $n \in \Nat$ there are $i<j \in\Nat$ such that $\pi_i \dots \pi_j$ contains $n$ occurrences of $e$ and no broadcast edges. An edge which is neither \bgood, \good, nor \bad\ is \sgood. By definition,  \bgood\ and \good\ edges are not broadcast edges. Since the set $\execinf$ is infinite, it is not at all clear that the problem of determining the type of an edge is decidable. Indeed, this turns out to be a complicated question, and we dedicate Section~\ref{sec: deciding types of edges} to show that one can decide the type of an edge.

The automaton $\execBSW$ is made up of three copies of $\execNFW$ (called ${\execBSW}^{1}$, ${\execBSW}^2$, ${\execBSW}^3$), as follows: ${\execBSW}^1$ is an exact copy of $\execNFW$; the copy ${\execBSW}^2$ has only the \good\ and \sgood\ edges left; and ${\execBSW}^3$ has only the \bgood\ and \good\ edges left (and in particular has no broadcast edges). Furthermore, for every edge $(s, \sigma, s')$ in ${\execBSW}^{1}$ we add two new edges, both with the same source as the original edge, but one going to the copy of $s'$ in the copy ${\execBSW}^2$, and one to the copy of $s'$ in the copy ${\execBSW}^3$. The initial states of $\execBSW$ are the initial states of ${\execBSW}^1$. For the acceptance condition: every state in ${\execBSW}^2$ and ${\execBSW}^3$ is a B\"uchi-state, and there is a single counter $C \in \Gamma_B$ that is incremented whenever an orange rendezvous edge is taken in ${\execBSW}^2$ and reset if a broadcast edge is taken in ${\execBSW}^2$.

Formally, given a process template $P = \tpl{ S,I, R, \Actionprts \cup \{ \brd \} }$ and its unwinding $P^\uwd = \tpl{ S^\uwd, I^\uwd, R^\uwd, \Actionprts \cup \{ \brd \} }$  define $\BSWAPInf = \tpl{ \Sigma, Q,Q_0, \Gamma,\delta,\Phi }$ as:
\begin{itemize}
\item The input alphabet $\Sigma$ is the edge relation $R$ of template $P$.
\item The state set $Q$ is $\{ (i,s) \st s \in S^\uwd, i \in \{ 1,2,3 \} \}$.
\item The initial state set $Q_0$ is $\{ (1,s) \st  s \in I^\uwd \}$.
\item There is one counter, $\Gamma = \{c\}$.
\item The transition relation $\delta$ is $\delta_1 \cup \delta_2 \cup \delta_3$, where: $\delta_1$ consists of all tuples 
$((1,s_1),(s_1^\rwd,\sigma,s_2^\rwd), \epsilon, (i,s_2))$ such that $(s_1,\sigma,s_2) \in R^\uwd, i \in \{1,2,3\}$; and
$\delta_3$ consists of all tuples $\{ ((3,s_1),(s_1^\rwd,\sigma,s_2^\rwd), \epsilon, (3,s_2))$ such that 
$ (s_1,\sigma,s_2) \in R^\uwd$ is $\bgood$ or $\good$; and 
$\delta_2$ consists of all tuples $((2,s_1),(s_1^\rwd,\sigma,s_2^\rwd), upd^{\sigma,\rho} , (2,s_2))$ such that 
$\rho := (s_1,\sigma,s_2) \in R^\uwd$ is $\good$ or $\sgood$, 
and $upd^{\sigma,\rho}$ is the single operation $c := 0$ if $\rho$ is $\sgood$ and $\actn(\sigma) = \brd$,  and $upd^{\sigma,\rho}$ is the single operation $c := c + 1$ if $\rho$ is $\sgood$ and $\actn(\sigma) \neq \brd$, and $upd^{\sigma,\rho}$ is the empty sequence $\epsilon$ if $\rho$ is $\good$. Here $\epsilon$ is the empty sequence of operations (i.e., do nothing to the counter).
\item The acceptance condition $\Phi$ states that $\limsup_i c(\rho,i) < \infty$ (i.e., counter $c$ must be bounded) and some state $q \in Q \setminus \{(1,s) \st s \in S^\uwd\}$ is visited infinitely often.
\end{itemize}

\begin{lemma}\label{lem: edge types and pseudo cycles}
  An edge $(s_1,\sigma,s_2)$ of $\procuwd$ is: \emph{(i)} \bad\ iff it does not appear on any pseudo-cycle of $\psyst^\uwd$;
  \emph{(ii)} \bgood\ iff it appears on a pseudo-cycle of $\psyst^\uwd$ with no broadcasts, but not on any that contain broadcasts;
  \emph{(iii)} \good\ iff it appears on a pseudo-cycle of $\psyst^\uwd$ with no broadcasts, that is part of a bigger pseudo-cycle with broadcasts;
  \emph{(iv)} \sgood\ iff it appears on a pseudo-cycle $C$ of $\psyst^\uwd$ that has broadcasts, but not on any without broadcasts.
\end{lemma}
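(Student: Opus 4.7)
The plan is to prove each of the four biimplications independently, although they all rest on the same two techniques. The constructive direction (``pseudo-cycle exists $\Rightarrow$ execution property holds'') uses Lemma~\ref{lem: loading} (to populate each state of the relevant component of $\procuwd$), Lemma~\ref{lem: psuedo cycle pumping} (to pump the pseudo-cycle with renaming), and, where needed, Lemma~\ref{lem: rb-system composition} to run multiple simulations in parallel. The extraction direction (``execution property $\Rightarrow$ pseudo-cycle exists'') uses that, for any fixed $n_0$, the configurations of $(\psyst^\uwd)^{n_0}$ modulo twin-equivalence form a finite set, so pigeonhole on positions of a witness run yields the desired pseudo-cycle.

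I would first set up a generic construction template. Given a pseudo-cycle $\pi$ in $(\psyst^\uwd)^{n_0}$ with $\src(\pi) = h$ lying in some component $\proctemp_j$, use Lemma~\ref{lem: loading} to reach a configuration $f^\ast$ with at least $\max_{s} |h^{-1}(s)|$ processes in every state of $\proctemp_j$; designate a subset of processes matching $h$ and have them simulate $\pi$, while the non-designated ``extras'' stay idle on rendezvous transitions and take some outgoing broadcast edge on broadcast transitions (always available by the definition of process template). Because $\dst(\pi)$ is a twin of $h$ in the same component, Lemma~\ref{lem: psuedo cycle pumping} lets me iterate $\pi$ indefinitely; using the freedom in the renaming together with the inherent symmetry in $\exec$, process $1$ can be pinned as the participant of the $e$-transition in infinitely many iterations. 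The resulting execution contains infinitely many broadcasts iff $\pi$ does, which is precisely the distinction needed between cases (i)--(iv).

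The extraction direction is, in every part, an application of pigeonhole after lifting the witness execution to a run $\rho$ of $(\psyst^\uwd)^{n_0}$ via Lemma~\ref{lem: runs can be lifted to unwinding}. For (i) I would pick two twin positions of $\rho$ immediately preceding an $e$-transition; for (ii) the same pigeonhole restricted to the broadcast-free suffix of a \bgood-witness; for (iv) two $e$-positions straddling at least one broadcast in a \sgood-witness (which necessarily has infinitely many broadcasts, else $e$ would be \bad or \bgood). For (iii), the extraction proceeds in two stages: first, inside an inter-broadcast segment with more $e$-occurrences than twin classes, pigeonhole yields a broadcast-free pseudo-cycle $C$ through $e$; second, pigeonhole on the twin class of $\src(C)$ across the whole (broadcast-rich) execution produces a reappearance of that twin class past $\dst(C)$ and past at least one broadcast, giving the required $C'$ that has $C$ as its initial sub-path. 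The exclusion clauses (``not on any pseudo-cycle with / without broadcasts'') in parts (ii) and (iv) are obtained by contraposition against the construction template.

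The only genuinely new construction is in part (iii) $\Leftarrow$: writing $C' = \alpha \cdot C \cdot \beta$, the concatenation $\alpha \cdot C^k \cdot \beta$ is again a pseudo-cycle with the same broadcast count as $C'$; using the $k$-th expansion as the $k$-th iteration of the pumped infinite path yields unboundedly many $e$-occurrences between the broadcasts contributed by the $k$-th iteration. The hardest point I expect is not any single case but the bookkeeping required to make the generic template work: (a) ensuring that the ``extras'' can always synchronise on the broadcasts of $\pi$ without drifting into configurations that block future iterations, and (b) that the iterated renaming is rich enough to pin the $e$-traversal onto process $1$ in each relevant iteration. Both are handled by the standing assumption that each template state has an outgoing broadcast edge, together with the fact that $\exec$ is the projection onto an arbitrary (hence freely-chosen) single process.
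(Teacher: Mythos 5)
Your overall architecture --- construction by loading, composition and pumping with process~$1$ pinned to the $e$-transitions, and extraction by lifting a witness run and applying pigeonhole to twin classes of configurations --- is exactly the toolkit the paper assembles in Lemmas~\ref{lem: rb-system composition}--\ref{lem: loading}, and the extraction halves of (i), (ii) and (iv) go through as you describe. However, there is a genuine gap: you treat the four biimplications as essentially independent and dispose of the exclusion clauses of (ii) and (iv) ``by contraposition against the construction template''. This does not suffice. The four right-hand sides must \emph{partition} the edges (because \bad, \bgood, \good, \sgood\ do), and the partition fails precisely on an edge $e$ that lies on some broadcast-free pseudo-cycle \emph{and} on some unrelated pseudo-cycle with broadcasts, without a \emph{nested} pair being exhibited. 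Contraposing your template only shows such an $e$ is neither \bad\ nor \bgood; it leaves open that $e$ is \sgood, contradicting clause (iv). To close the argument you must prove a combination claim: the two pseudo-cycles can always be merged into a broadcast-free pseudo-cycle sitting inside a larger pseudo-cycle with broadcasts. This is the technical heart of the lemma: it requires Lemma~\ref{lem: pseudo cycle implies one with r broadcasts} to normalize the broadcast pseudo-cycle to exactly $r$ broadcasts starting in $\proctemp_n$, Lemma~\ref{lem: rb-system composition} to adjoin a group that escorts the broadcast-free cycle through those $r$ broadcasts, and an argument that the escorting group returns to a twin of its own starting configuration (e.g.\ by iterating the round trip until the induced map on multisets of states cycles). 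None of this appears in your proposal.

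A second, smaller gap concerns the pinning of process~$1$ in the backward direction of (iii). The property defining \good\ is about the projection onto a \emph{single fixed} process, so in the pumped path built from $\alpha\cdot C^k\cdot\beta$ you need process~$1$ itself to traverse $e$ on the order of $k$ times inside the broadcast-free block, for every $k$. The renamings of Lemma~\ref{lem: psuedo cycle pumping} induce a permutation of roles that can cycle process~$1$ out of the $e$-traversing role, and ``symmetry'' only lets you choose the projected process once, at the start of the run, not per iteration. This is repairable --- either by raising $C$ to a power for which the induced permutation fixes the $e$-traverser role, or by the explicit swapping device used in the proof sketch of Theorem~\ref{thm: BSW correctness}, where process~$1$ borrows the $e$-transition from a group that then re-traverses its inner broadcast-free cycle to restore itself --- but it must be argued, not asserted. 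Finally, your two-stage pigeonhole for the forward direction of (iii) needs the anchor twin class of $C$ to be chosen among those recurring in infinitely many of the $e$-rich inter-broadcast segments (a double pigeonhole); otherwise the class of $\src(C)$ is only guaranteed to occur twice and need not reappear after a later broadcast.
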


The following lemma states that we can assume that pseudo-cycles mentioned in Lemma~\ref{lem: edge types and pseudo cycles} (that have broadcasts) are of a specific form.

\begin{lemma}\label{lem: pseudo cycle implies one with r broadcasts}
  An edge $e$ appears on a pseudo-cycle $D$ in $\psyst^\uwd$, which contains broadcasts, iff it appears on a pseudo cycle $C$ of $\psyst^\uwd$ containing exactly $r$ broadcast transitions and with all processes starting in the component $\proctemp_n$, where $n,r$ are the prefix length and period of $\procuwd$, respectively. Furthermore, $C$ preserves any nested pseudo-cycles of $D$ that contain no broadcasts.
\end{lemma}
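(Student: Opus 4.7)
The forward direction is immediate: take $D:=C$.

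For the backward direction, first observe that the lasso structure of $\procuwd$ forces any pseudo-cycle $D$ containing broadcasts to keep every process in a loop component throughout, and to have exactly $kr$ broadcasts for some $k\ge 1$. The reason is that broadcasts synchronously shift every process's component by one (cyclically with period $r$ in the loop), so the multiset of states returns to its starting value only after a multiple of $r$ broadcasts. Partition $D$ accordingly into $k$ consecutive rounds of $r$ broadcasts each, with inter-round configurations $f_1,\dots,f_{k+1}$, where $f_{k+1}$ is a twin of $f_1$.

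The main construction applies Lemma~\ref{lem: rb-system composition}: in a larger system, partition the processes into $k$ equally sized groups and let group $j$ simulate round $j$ of $D$. Because each round has exactly $r$ broadcasts, the simulations synchronize in lockstep, producing a pseudo-cycle $C_0$ with exactly $r$ broadcasts whose starting and ending multisets agree (both equal the sum of the multisets of $f_1,\dots,f_k$, using that $f_{k+1}$ is a twin of $f_1$). Moreover, $C_0$ contains $e$ (executed in the group whose round originally hosted $e$) and preserves every no-broadcast pseudo-cycle nested in $D$, since any such sub-cycle must lie within a single round (broadcasts separate rounds).

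The remaining --- and main technical --- obstacle is that $C_0$ starts with processes spread across all loop components, whereas $C$ must have all processes start in $\proctemp_n$. Since broadcasts advance every process by one component in lockstep, any pseudo-cycle starting all-in-$\proctemp_n$ must keep all processes synchronized in a single common component throughout, ruling out any ``prepend a reaching path'' reduction to $C_0$. Instead, we redesign the pseudo-cycle: enlarge the system further with loader processes placed with many copies in each state of $S_n$, valid because $C$ need not start from an initial configuration, and construct $C$ as a synchronized-per-sojourn pseudo-cycle that aggregates $C_0$'s rendezvous by the component in which they occur --- during $C$'s sojourn in $\proctemp_c$ we execute every rendezvous that $C_0$ does in $\proctemp_c$. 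This works thanks to two structural properties of $\procuwd$: (i) the saturation property of $R_c$, namely that every state of $S_c$ is reachable from $I_c$ via rendezvous in $R_c$, so enough loaders, once broadcast into $I_c$, can be redistributed to supply the source states of $e$ and of any other required rendezvous; and (ii) $r$ successive broadcasts induce a self-map on $S_n$ that acts as a permutation on its eventual image, giving an invariant loader distribution against which the pseudo-cycle closes. The no-broadcast sub-cycles nested in $D$ contribute zero net multiset change, so they can be inserted freely during the corresponding sojourns, and the residual balance inherited from $C_0$ transplants to the synchronized setting whenever enough loaders are used. The main technical content of the proof lies in this balance-transplantation, which requires careful multiset bookkeeping through the alternation of broadcasts and per-sojourn rendezvous.
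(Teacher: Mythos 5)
Your opening structural claim is false: a pseudo-cycle of $\psyst^\uwd$ with broadcasts need not have a multiple of $r$ broadcasts. It is true that all processes must sit in loop components throughout, but since a pseudo-cycle need not start in an initial configuration its processes may be spread over several loop components, and then nothing forces $b\equiv 0 \pmod r$: the broadcasts only require the distribution of processes over loop components to be invariant under the cyclic shift by $b \bmod r$ (e.g., with $r=2$ and processes split between the two loop components, a single broadcast can already be a pseudo-cycle). This is easy to repair --- glue $r/\gcd(b,r)$ copies of $D$ end to end, renaming processes as in Lemma~\ref{lem: psuedo cycle pumping}, to force a multiple of $r$ --- but it must be done explicitly. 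With that fix, your parallel composition of the $k$ rounds via Lemma~\ref{lem: rb-system composition} into a pseudo-cycle $C_0$ with exactly $r$ broadcasts is sound, as is your observation that broadcast-free nested pseudo-cycles survive because each sits strictly between two consecutive broadcasts.

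The genuine gap is the last step. The loader-based ``aggregation by component'' plus ``balance transplantation'' is not a proof: you never show that the path closes into a pseudo-cycle once loaders have been moved off their starting distribution to supply source states, and ``the self-map on $S_n$ induced by $r$ broadcasts'' is not even well defined, since a state may have several outgoing broadcast edges. You explicitly defer ``the main technical content,'' which is precisely the part being examined. Moreover, the machinery is unnecessary. Since $C_0$ has exactly $r$ broadcasts, the cyclic shift of loop components it induces is the identity, and since rendezvous edges never cross component boundaries, $C_0$ decomposes into independent sub-paths indexed by the starting component of the participating processes; each sub-path is itself a pseudo-cycle with $r$ broadcasts whose processes all start in one common component and hence stay synchronized in a single component throughout. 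Keep only the sub-path of the processes starting in the component from which your chosen traversal of $e$ originates. Finally, rotate that pseudo-cycle to begin at its unique sojourn in $\proctemp_n$: writing it as $\alpha\beta$ with $\beta$ starting at the first configuration lying in $\proctemp_n$, the twin relation between its endpoints lets you append a process-renamed copy of $\alpha$ after $\beta$, yielding the required $C$; the cut falls immediately after a broadcast, so no broadcast-free nested pseudo-cycle is severed. (The converse direction is indeed trivial, though you have swapped the names ``forward'' and ``backward'' relative to the statement.)
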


\begin{theorem}\label{thm: BSW correctness}
 The language recognized by $\execBSW$ is exactly $\execinf$.
\end{theorem}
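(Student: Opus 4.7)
The plan is to prove the two inclusions $\execinf \subseteq L(\execBSW)$ (completeness) and $L(\execBSW) \subseteq \execinf$ (soundness) separately. The structural backbone of any accepting run of $\execBSW$ is a finite prefix in copy~1 followed by a transition into copy~2 or copy~3, after which the run never leaves that copy; correspondingly, I partition infinite executions of $\psyst$ into those with finitely many broadcasts (handled by copy~3) and those with infinitely many broadcasts (handled by copy~2).

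For completeness, given $w \in \execinf$, I mimic the finite prefix in copy~1, which is legal since copy~1 is exactly $\execNFW$ and every finite prefix of $w$ lifts, via Lemma~\ref{lem: unwinding captures executions}, to a finite run of $\procuwd$. The switch point is determined by the number of broadcasts in $w$. If $w$ has only finitely many broadcasts, I switch after the last one into copy~3; the verification that each edge $e$ used infinitely often in the broadcast-free tail is blue or green amounts to observing that, since the state space of the witnessing $\sysinst{n}$ is finite, some configuration must recur just before $e$, yielding a broadcast-free pseudo-cycle of $\psyst^\uwd$ through $e$, so by Lemma~\ref{lem: edge types and pseudo cycles}, $e$ is blue or green. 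If $w$ has infinitely many broadcasts, I switch to copy~2 at a point after which no blue edges (blue edges appear only finitely often on such executions by definition) and no bad edges occur. The remaining edges must be green or orange, and because orange edges are defined by the \emph{absence} of the green property, the number of orange edges between consecutive broadcasts is uniformly bounded along $w$; since the counter $c$ is reset at every broadcast in copy~2, it stays bounded, so the B-condition is satisfied, and the Büchi state is visited at every step of the suffix.

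For soundness, given an accepting run of $\execBSW$ on $w$, I construct $\pi \in \runs(\sysinst{N})$ with $\proj{\pi}(1) = w$ for a sufficiently large $N$. Lemma~\ref{lem: unwinding captures executions} handles the finite copy-1 prefix as a finite execution. For the infinite suffix, I use Lemma~\ref{lem: edge types and pseudo cycles} to associate to every edge $e$ appearing in the suffix a pseudo-cycle of $\psyst^\uwd$ containing $e$, and combine these via Lemma~\ref{lem: rb-system composition} (composition) and Lemma~\ref{lem: psuedo cycle pumping} (pumping). In the copy-3 case, all such pseudo-cycles are broadcast-free and confined to a single component, so process~1 traces $w$ while finitely many batches of helpers, one batch per edge used infinitely often, pump their own pseudo-cycles in parallel, supplying $k-1$ rendezvous partners for every step of process~1 and returning to twin configurations (via renaming) in time for the next call. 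In the copy-2 case, I first invoke Lemma~\ref{lem: pseudo cycle implies one with r broadcasts} to normalize every such pseudo-cycle so it contains exactly $r$ broadcasts and starts in $\proctemp_n$; the helpers then participate, in lockstep with process~1, in each group of $r$ broadcasts per lasso revolution, while the boundedness of $c$ guarantees a uniform bound on orange edges per broadcast period, so a fixed finite batch of helpers per period suffices.

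The main obstacle is the soundness proof in the copy-2 case, where every broadcast is a \emph{global} event involving all processes and therefore forces the parallel composition of pumped pseudo-cycles to be perfectly synchronized: the $i$-th broadcast taken by process~1 must coincide with the $i$-th broadcast of every helper's pumped cycle. This is precisely why Lemma~\ref{lem: pseudo cycle implies one with r broadcasts} is needed to bring all helpers to a common normal form (exactly $r$ broadcasts, starting in $\proctemp_n$), and why the B-condition on $c$ is essential: without an a priori bound on the number of orange edges between consecutive broadcasts, no fixed allocation of helpers per period would suffice.
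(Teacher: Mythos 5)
Your architecture matches the paper's: the same three-copy decomposition, the same use of Lemmas~\ref{lem: edge types and pseudo cycles}, \ref{lem: pseudo cycle implies one with r broadcasts}, \ref{lem: rb-system composition} and \ref{lem: psuedo cycle pumping}, and the same helper-group construction with process~$1$ swapping into the group that is about to supply the edge it needs. The completeness direction is essentially right (one small imprecision: in the finitely-many-broadcasts case you must switch into copy~3 not merely after the last broadcast but after the last occurrence of any edge that is not \bgood\ or \good; your pigeonhole argument only certifies the edges occurring infinitely often in the tail).

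The genuine gap is in the soundness argument for the copy-2 case, and it concerns \good\ edges, not \sgood\ ones. You correctly observe that boundedness of the counter $c$ bounds the number of \sgood\ edges per broadcast period, so a fixed batch of $\mathfrak{m}$ helper groups per \sgood\ edge suffices. But copy~2 places no bound whatsoever on \good\ edges: $L(\execBSW)$ contains words in which a \good\ edge is taken an unbounded, growing number of times between consecutive broadcasts (this is exactly the non-$\omega$-regular behaviour of Lemma~\ref{lem: infinite exec no regular} that motivates the whole construction). A ``fixed finite batch of helpers per period'' cannot supply such an edge if each helper group yields it only once per revolution of its $r$-broadcast pseudo-cycle. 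The missing idea --- which the paper calls the key observation --- is that a \good\ edge lies on an \emph{inner broadcast-free} pseudo-cycle nested inside the $r$-broadcast pseudo-cycle (Lemma~\ref{lem: edge types and pseudo cycles}(iii), preserved under the normalization of Lemma~\ref{lem: pseudo cycle implies one with r broadcasts}), so after process~$1$ uses the group, the group can re-traverse that inner cycle and return to a twin configuration \emph{before the next broadcast}, making it reusable arbitrarily many times within a single period; hence one group per \good\ edge suffices. Without this recycling step the construction cannot realize the unbounded-between-broadcasts words that $\execBSW$ accepts, and soundness fails.
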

\begin{proof}[sketch]
The fact that every word in $\execinf$ is accepted by $\execBSW$ follows in a straightforward way from its construction.
For the reverse direction, given $\alpha \in \execinf$ with an accepting run $\Omega$ in $\execBSW$, we need to construct a run $\pi$ in $\psyst$ whose projection on process $1$ is $\alpha$. We consider the interesting case that $\alpha$ has infinitely many broadcasts (and thus finitely many \bad\ and \bgood\ edges). The challenging part is how to make process $1$ trace the suffix $\beta$ of $\alpha$ containing only \good\ and \sgood\ edges. Since $\Omega$ is accepting, counter $C_2$ is bounded on $\Omega$. Hence, there is a bound $\mathfrak{m}$ on the number of \sgood\ edges in $\beta$ between any $r$ broadcasts, where $r$ is the period of $\procuwd$.

For every \good\ (resp. \sgood) edge $e$ of $\proctemp$ that appears on $\beta$, by Lemmas~\ref{lem: edge types and pseudo cycles}, \ref{lem: pseudo cycle implies one with r broadcasts}, there is a pseudo-cycle $C_e$ with $r$ broadcasts on which $e$ appears. Furthermore, if $e$ is \good\ it actually appears on an inner pseudo-cycle of $C_e$ without broadcasts. Let $E_\good$ (resp. $E_\sgood$) be the set of $\good$ (resp. $\sgood$) edges that appear infinitely often on $\alpha$. By taking exactly enough processes to assign them to one copy of $C_e$ for every $e \in E_\good$, and $\mathfrak{m}$ copies of $C_e$ for every $e \in E_\sgood$, and composing them using Lemma~\ref{lem: rb-system composition} we can simulate all these copies of these pseudo-cycles in one pseudo-cycle $D$ also with $r$ broadcasts. By Lemma~\ref{lem: psuedo cycle pumping}, we can pump this pseudo-cycle forever. Furthermore, between broadcasts we have freedom on how to interleave the simulations. We make process $1$ trace $\beta$ by making it successively swap places with the right process in the group simulating a copy of the cycle $C_e$ where $e$ is the next edge on $\beta$ to be traced (just when the group is ready to use that edge). The key observation is that once a group is used by process $1$ there are two options. If it is a group corresponding to a \good\ edge then we can make the group (after $1$ leaves it) traverse the inner pseudo-cycle (the one without broadcasts) thus making it ready to serve process $1$ again. If the group corresponds to an \sgood\ edge $e$, then it will only be reusable when the whole pseudo-cycle $C_e$ completes (since there is no inner pseudo-cycle to use), i.e., after $r$ broadcasts. However, since there are $\mathfrak{m}$ groups for each such edge, and $\mathfrak{m}$ bounds from above the number of \sgood\ edges that need to be taken by process $1$ between $r$ broadcasts.
\qed
\end{proof}

As we show (Section~\ref{sec: deciding types of edges}, Theorem~\ref{thm: edge type decidability}), the problem of determining the type (\bgood, \good, \sgood, or \bad) of an edge in $\procuwd$ is decidable, hence, we conclude this section by stating our main theorem (the proof is now immediate).

\begin{theorem} \label{thm: main dec}
  The PMCP (of RB-systems or discrete timed networks) for BW- or SW-specifications or complements of specifications given by BSW, is decidable.
\end{theorem}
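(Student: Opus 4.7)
The plan is to assemble the machinery the paper has already built and reduce every variant of the PMCP to emptiness of a BSW. The starting point is the automaton $\execBSW$ from Section~\ref{sec: PMCP infinite}: by Theorem~\ref{thm: BSW correctness} its language is exactly $\execinf$, and its construction is effective provided we can decide, for every edge of $\procuwd$, whether it is $\bgood$, $\good$, $\sgood$, or $\bad$. That decision problem is exactly what will be settled by Theorem~\ref{thm: edge type decidability} in Section~\ref{sec: deciding types of edges}, so I would invoke it as a black box here; once it is available, $\execBSW$ is effectively a BW recognizing $\execinf$.

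Next I would handle the three specification classes uniformly by rephrasing PMCP as a non-emptiness check. For a specification given by a BW $\A$ (the good executions), PMCP asks whether $\execinf\subseteq L(\A)$, i.e.\ whether $\execinf\cap\overline{L(\A)}=\emptyset$. Using the fact recalled from~\cite{b10} that BWs complement to SWs, $\overline{L(\A)}$ is SW-recognizable; intersecting with the BW $\execBSW$ yields a BSW (closure under intersection), whose emptiness is decidable. The SW case is symmetric, swapping the roles of B- and S-automata. For a negative specification given by a BSW $\A$ describing the bad executions, PMCP directly asks $\execinf\cap L(\A)=\emptyset$, which is emptiness of the BSW obtained by intersecting the BW $\execBSW$ with $\A$. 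In all three cases the algorithm is: build $\execBSW$; complement the specification if needed; intersect; test emptiness.

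The timed-network case reduces to the RB-system case via the translation already given in Section~\ref{sec: definitions}: from a timed network template $T$ one constructs an RB-template $P$ with $\exec=\execLimited$, so that specifications over executions of $\psysttimed$ transfer verbatim to executions of $\psyst$. Applying the RB-system algorithm to $P$ therefore solves the PMCP for the original timed network.

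I do not expect a genuine obstacle inside this theorem, because every ingredient has been isolated elsewhere; the only piece not yet in hand is the effective classification of edges of $\procuwd$, and that is the real work, postponed to Section~\ref{sec: deciding types of edges}. The remaining steps are a short bookkeeping argument about closure and emptiness of BS-automata together with the timed-to-untimed reduction.
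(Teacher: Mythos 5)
Your proposal is correct and follows the paper's intended argument exactly: the paper treats this theorem as immediate from Theorem~\ref{thm: BSW correctness} and Theorem~\ref{thm: edge type decidability}, reducing PMCP to emptiness of the intersection of the BW $\execBSW$ with the (complemented, where needed) specification, and handling timed networks via the translation to RB-templates with $\exec=\execLimited$. Your explicit bookkeeping of the three specification classes via the complementation closure of BW/SW is precisely the step the paper leaves implicit.
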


\subsection{Deciding Edge Types}\label{sec: deciding types of edges}

\begin{theorem}\label{thm: edge type decidability}
  Given a process template $\procuwd$, the problem of determining the type (\bgood, \good, \sgood, \bad) of an edge $e$ in $\procuwd$ is decidable.
\end{theorem}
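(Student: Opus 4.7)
The plan is to reduce deciding the type of $e$ to two Boolean conditions on pseudo-cycles of $\psyst^\uwd$, and then to encode each as a reachability question in a vector addition system. By Lemma~\ref{lem: edge types and pseudo cycles}, the type of $e$ is determined by whether: (a) $e$ lies on some pseudo-cycle of $\psyst^\uwd$ with no broadcasts, and (b) $e$ lies on some pseudo-cycle of $\psyst^\uwd$ with broadcasts. Specifically, \bad\ corresponds to $\neg(a)\wedge\neg(b)$, \bgood\ to $(a)\wedge\neg(b)$, \sgood\ to $\neg(a)\wedge(b)$, and \good\ to $(a)\wedge(b)$; in the last case, the two witness pseudo-cycles may be combined, using Lemmas~\ref{lem: loading} and~\ref{lem: rb-system composition} (load extra processes to host one cycle and interleave them with the other), into a single pseudo-cycle with broadcasts that nests a no-broadcast pseudo-cycle through $e$, which is exactly the witness required by clause~(iii) of Lemma~\ref{lem: edge types and pseudo cycles}.

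For (a), a no-broadcast pseudo-cycle lies entirely inside a single component $P_i$ of $\procuwd$, which is an R-template. I would model the rendezvous dynamics of $P_i$ as a vector addition system $V_i$ with one counter per state of $P_i$, and one transition per rendezvous action that decrements the source-state counters and increments the destination-state counters. A pseudo-cycle in $P_i$ corresponds exactly to a non-empty firing sequence of $V_i$ with total displacement $\vec 0$; by Lemma~\ref{lem: loading}, any such sequence can be realized in $\psyst^\uwd$ after a prefix that loads sufficiently many processes into the appropriate states of $P_i$. Condition (a), for $i=\comp(e)$, thus reduces to deciding whether $V_i$ admits a $\vec 0$-cycle that uses the transition for $\actn(e)$ at least once — a standard VAS reachability question, hence decidable by Mayr/Kosaraju.

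For (b), Lemma~\ref{lem: pseudo cycle implies one with r broadcasts} lets me restrict attention to pseudo-cycles with exactly $r$ broadcasts starting in $P_n$. Such a pseudo-cycle splits into $r$ rendezvous-only segments in consecutive components $P_n,\ldots,P_m$, separated by $r$ global broadcast moves. I would build a combined VAS $W$ with counters for every state of every component, rendezvous transitions inherited from the individual $V_j$'s, and broadcast-phase transitions that transfer processes one-by-one from $P_{n+j}$ to $P_{n+j+1}$ along the template's broadcast edges. The existence of a pseudo-cycle through $e$ then becomes a reachability-to-origin question in $W$ in which one mandatorily fires the transition for $\actn(e)$ in the segment corresponding to $\comp(e)$, again decidable via VAS reachability.

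The main obstacle is simulating the simultaneous, synchronized nature of broadcasts inside a VAS, which lacks native zero-tests: a broadcast must move \emph{every} process at once. I would handle this by exploiting that VAS reachability permits coordinatewise target specification — between segments one requires the target marking to set all counters of the outgoing component to zero at the end of the broadcast sub-phase, which forces every remaining process to take a broadcast edge before the next rendezvous segment begins. An alternative, in the spirit of the ``rational relaxation of VAS'' remarks in the introduction, would phrase (a) and (b) as linear-programming feasibility problems for the Parikh images of the segments (flow conservation inside each component, plus a consistency constraint across broadcasts) over the non-negative rationals, and then lift to integer solutions by scaling, once more invoking Lemma~\ref{lem: loading} to supply the required multiplicities of processes. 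Either route yields a decision procedure for the four-way edge-type classification.
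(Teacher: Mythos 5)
Your opening reduction --- using Lemma~\ref{lem: edge types and pseudo cycles} to replace the infinitary type definitions by a search for witness pseudo-cycles, and Lemma~\ref{lem: pseudo cycle implies one with r broadcasts} to normalize the broadcast witnesses --- matches the paper's setup. Your further collapse of the four-way classification into the two Boolean conditions (a) and (b) is in fact sound, but not for the reason you give. The ``combination'' argument for the \good\ case is shaky: Lemma~\ref{lem: rb-system composition} only interleaves independent runs with \emph{equal} numbers of broadcasts, so the group hosting the no-broadcast cycle would itself have to close up into a pseudo-cycle after the $r$ broadcasts, which neither Lemma~\ref{lem: rb-system composition} nor Lemma~\ref{lem: loading} (which gives reachability, not returnability) provides. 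You do not need it, though: since the four types partition the edges and clauses (i), (ii), (iv) of Lemma~\ref{lem: edge types and pseudo cycles} are exactly $\neg a\wedge\neg b$, $a\wedge\neg b$, $\neg a\wedge b$, the remaining case $a\wedge b$ must be \good\ by elimination. That is a genuinely different, and arguably simpler, decomposition than the paper's, which runs three successive witness searches including an explicit search for the \emph{nested} witness of clause~(iii).

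The real gaps are in your decision procedures for the witnesses, which is exactly where the paper invests its effort. The paper explicitly states that RB-systems \emph{cannot} be modelled by VASs (a broadcast moves unboundedly many processes) and instead detects witness pseudo-cycles with homogeneous \emph{rational} linear programs over aggregated Parikh-style data, scaling rational solutions to integers and using geometric surgery to turn the aggregated solution into a legal path; this is essentially your parenthetical ``alternative route,'' not your primary one. Your Mayr--Kosaraju route has two concrete problems. First, VAS reachability decides reachability between two \emph{given} markings, whereas your query for (b) is ``does there exist a start marking $(v,\vec{0})$ with $v$ supported on $P_n$ and unbounded from which one returns to $(v,\vec{0})$ while firing $e$'s transition'' --- an infinite family of instances. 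This needs an extra argument (e.g., relaxing the $P_n$-coordinates to integer-valued blind counters, or a Valk--Jantzen-style computation of the upward-closed set of good $v$'s); note that it is precisely this unbounded, scalable $v$ that keeps you on the decidable side of the paper's remark that reachability with a controller present is undecidable. Second, encoding a rendezvous as a single displacement vector loses self-loop edges: an edge with equal source and target contributes net zero, so the VAS transition can fire with no process present to take that edge --- this is the paper's ``obstacle (i)'' and must be repaired by Petri-net-style pre/post arcs or by splitting each rendezvous into a consume step followed by a produce step. Your zero-test workaround for broadcasts (phase-structured one-by-one transfer plus requiring the intermediate components' counters to vanish in the final marking) is a nice observation and does work for a single traversal of the $r$-broadcast lasso, but as written the proposal does not yet yield a well-defined finite decision procedure for condition (b).
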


A key observation for proving Theorem~\ref{thm: edge type decidability} is that
by Lemma~\ref{lem: edge types and pseudo cycles}, the type of an edge can be decided by looking for witnessing pseudo-cycles $C$ in $\psyst^\uwd$. Indeed, a witness can determine if an edge is \good\ or not. If not, another witness can determine if it is \sgood\ or not, and the last witness can separate the \bgood\ from the \bad.
We will show an algorithm that given an edge that is not \good\ tells us if it is \sgood\ or not. The algorithm can be modified to check for the other types of witnesses without much difficulty.

By Lemma~\ref{lem: pseudo cycle implies one with r broadcasts}, we can assume that the pseudo-cycle $C$ we are looking for has very specific structure. Our algorithm uses linear programming, in a novel and interesting way, to detect the existence of such a pseudo-cycle $C$.

\head{Counter Representation}
Given a process template $\proctemp = \tup{S,I,R,\Actionprts \cup \{\brd\}}$, let $d = |S|$, and fix once and for all some ordering $s_1, s_2, \dots, s_d$ of the states in $S$. We associate with every configuration $f$ in $\psyst$ a vector $f^\sharp := (|f^{-1}(s_1)|, \dots, \allowbreak |f^{-1}(s_d)|) \in \NatZero^d$, called the {\em counter representation} of $f$. 
We also associate with every transition $t = (f, \sigma, g)$ the vector $t^\sharp := g^\sharp - f^\sharp$ representing the change in the number of processes in each state. If $t$ is a rendezvous transition then $g^\sharp - f^\sharp$ is completely determined by the action $\msg{a} \in \Actions$ taken in $\sigma$. Indeed, if $\sigma = ((i_1, \msg{a}_1), \dots, (i_k, \msg{a}_k))$ then $g^\sharp - f^\sharp = \msg{a}^\sharp$, where $\msg{a}^\sharp \in \NatZero^d$ is the vector defined by letting $\msg{a}^\sharp(s) := |\{ j \in [k] \mid \dst(\edge(a_j)) = s \}| - |\{ j \in [k] \mid \src(\edge(a_j)) = s \}|$ for every $s \in S$.

Given $u \in \Rat^d$, and a sequence of vectors $\varrho = \varrho_1 \dots \varrho_m$ in $\Rat^d$, the pair $\rho = (u, \varrho)$ is called a path from $u$ to $v = u + \Sigma_{i=1}^m \varrho_i$.
We write $\rho_j$ for the vector $u + \Sigma_{i=1}^j \varrho_i$, for every $0 \leq j \leq m$.
The path $\rho$ is {\em legal} if $\rho_j \in \RatGEZ^d$ for every $0 \leq j \leq m$, i.e., if no coordinate goes negative at any point. Given a finite path $\pi_1 \dots \pi_m$ in $\psyst$, we call the path $\pi^\sharp := (\src(\pi_1)^\sharp, \pi_1^\sharp \dots \pi_m^\sharp)$ in $\Rat^d$ its {\em counter representation}. Observe that $\pi^\sharp$ is always a legal path.

\head{Rational Relaxation of VASs}
Vector Addition Systems (VASs) or equivalently Petri nets are one of the most popular  formal  methods  for  the  representation  and  the  analysis  of  parallel
processes \cite{journals/eatcs/EsparzaN94}.
Unfortunately, RB-systems \textbf{cannot} be modelled by VASs since a transition in a VAS only moves a constant number of processes, whereas a broadcast in an RB-system may move any number of processes. 
On the other hand, R-System can be modelled by VASs, and we do use this fact to analyze the behaviour of the counter representation between broadcasts.
Moreover, we note that integer linear programming is a natural fit for describing paths and cycles in the counter representation.
However, in order to apply linear programming to RB-systems we have to overcome two intertwined obstacles:
\emph{(i)} not every path in the counter representation induces a path in $\psyst$, and
\emph{(ii)} since we have no bound on the length of the pseudo-cycle $C$ we cannot have variables describing each configuration on it, and we need to aggregate information. These obstacles are aggravated by the presence of broadcasts.
Another difficulty of applying linear programming to RB-systems arises from the fact that the question of reachability in an RB-system with two (symmetric) broadcast actions and a controller is undecidable (which can be obtained by modifying a result in~\cite{efm99} concerning asymmetric broadcast).

The solution we propose to this problem, which we found to be surprisingly powerful, is to use linear programming but look for a solution in {\em rational} numbers and not in integers. Thus, we introduce the notion of the \emph{rational relaxation} of a VAS, obtained by allowing any non-negative rational multiple of configurations and transitions of the original VAS.
Since our linear programs use homogeneous systems of equations, multiplying a rational solution by a large enough number would yield another solution in integers.
Thus the scaling property obtained a consequence of rational relaxation precludes the possibility of specifying a single controller!
Thinking of the counter representation as vectors of rational numbers also allows us to use geometric reasoning to solve the two problems (i), (ii) described above.
Essentially, by cutting transitions to smaller pieces (which cannot be done at will to integer vectors) and rearranging the pieces, we can transform a description of a path in an aggregated form, as it comes out of the linear program, into one which is legal and can be turned into a path in $\psyst$. We strongly believe that these techniques can be fruitfully used in other circumstances concerning counter-representations, and similar objects (such as vector addition systems and Petri nets).

Due to lack of space, the description of the linear programs we use, as well as the geometric machinery we develop will be published in an extended version.

\bibliography{SBC}

\end{document}